\begin{document}

\def\dd#1{{\mathsf{#1}}}
\def\ddop#1{{\mbox{\sc{#1}}}}

\title{Magnifying Lens Abstraction for Stochastic Games with
Discounted and Long-run Average Objectives}

\author{Krishnendu Chatterjee\inst{1} \and Luca de Alfaro\inst{2} \and Pritam Roy\inst{3}}

\institute{ IST Austria 
\and UC Santa Cruz, USA
\and  UC Los Angeles, USA
}
\date{}
\maketitle
\thispagestyle{empty}
\begin{abstract}
Turn-based stochastic games and its important subclass Markov decision 
processes (MDPs) provide models for systems with both probabilistic and 
nondeterministic behaviors.
We consider turn-based stochastic games with two classical quantitative 
objectives: discounted-sum and long-run average objectives.
The game models and the quantitative objectives are widely used in 
probabilistic verification, planning, optimal inventory control, network 
protocol and performance analysis.
Games and MDPs that model realistic systems often have very large 
state spaces, and probabilistic abstraction techniques are necessary to handle 
the state-space explosion.
The commonly used full-abstraction techniques do not yield space-savings for 
systems that have many states with similar value, but does not necessarily 
have similar transition structure. 
A semi-abstraction technique, namely  Magnifying-lens abstractions (MLA), that 
clusters states based on value only, disregarding differences in their 
transition relation was proposed for qualitative objectives (reachability and
safety objectives)~\cite{deAlR07}.
In this paper we extend the MLA technique to solve stochastic games with 
discounted-sum and long-run average objectives.
We present the MLA technique based abstraction-refinement algorithm for 
stochastic games and MDPs with discounted-sum objectives.
For long-run average objectives, our solution works for all MDPs and a 
sub-class of stochastic games where every state has the same value.
\end{abstract}

%\setlength{\parindent}{0pt}

%------------------------------------------------------------------------- 
\section{Introduction}
%\mynote{Probabilistic Systems MDPs, Stochastic Games and Applications}
\noindent
A \emph{turn-based stochastic game} is played on a finite graph with three
types of states:
in player-1 states, the first player chooses a successor state from a given
set of outgoing edges;
in player-2 states, the second player chooses a successor state from a given
set of outgoing edges;
and in probabilistic states, the successor state is chosen according to a given
probability distribution.
The game results in an infinite path through the graph.
An important subclass of turn-based stochastic games is \emph{Markov decision
processes (MDPs)}: in MDPs the set of player-2 states is empty.
Turn-based stochastic games and MDPs provide models for the study of dynamic 
systems that exhibit both probabilistic and nondeterministic behavior.

Turn-based stochastic games with \emph{qualitative objectives} such as 
reachability, safety, and more general $\omega$-regular objectives has 
been widely studied in literature \cite{CY95,luca-thesis,CdAH-icalp05,Condon92} 
in the context of verification of probabilistic systems. 
Many other application scenarios such as planning, inventory control,
performance analysis require the study of turn-based stochastic games with 
\emph{quantitative objectives}~\cite{Derman,Bertsekas95,SegalaT,luca-thesis,KNP04a}.  
The two classical quantitative objectives studied in literature are as follows:
\emph{discounted-sum} (in short, discounted) and \emph{long-run average} 
objectives~\cite{FilarVrieze97,Bertsekas95}. 
In both these objectives a real-valued reward is assigned to every state.
For an infinite path (infinite sequence of states in the game graph), 
the discounted objective assigns a payoff that is the 
discounted sum of the rewards that appear in the infinite path, and the
long-run average objective assigns the long-run average of the rewards
that appear in the path.
Turn-based stochastic games and MDPs with discounted and long-run average 
objectives provide an important and powerful framework for studying a wide 
range of applications~\cite{FilarVrieze97,Bertsekas95}.

Turn-based stochastic games and MDPs that model realistic systems typically
have very large state spaces. 
Therefore the main algorithmic challenge in analysing such models consist of 
developing algorithms that work efficiently on large state spaces. 
In the non-probabilistic setting, abstraction techniques have been
successful in coping with large state-spaces~\cite{CounterexCAV00}.
By ignoring details not relevant to the property under study, abstraction makes 
it possible to answer questions about a system through the analysis of a smaller, 
more concise abstract model. 
The abstraction-refinement techniques for non-probabilistic setting do not 
always have a straight-forward extension to the probabilistic models.
The commonly used full-abstraction techniques do not yield space-savings for 
systems that have many states with similar value, but not necessarily 
have similar transition structure. 
A semi-abstraction technique, namely  
\emph{Magnifying-lens abstractions (MLA)}, 
was proposed for a subclass of qualitative objectives (namely, 
reachability and safety objectives)~\cite{deAlR07}.
MLA is a semi-abstract technique that can cluster states based on value 
only and can disregard the differences in their transition relation.
MLA is particularly well-suited to problems where there is a notion of 
{\em locality\/} in the state space, so that it is useful to cluster 
states based on values, even though their transition relations 
may not be similar. 
Many inventory, planning and control problems satisfy the locality 
property and would benefit from the MLA technique.
In the setting of inventory, planning and control problems 
quantitative objectives are more appropriate than qualitative 
objectives. 
This provides a strong and practical motivation for extending the work 
of~\cite{deAlR07} to provide MLA technique based solution for 
turn-based stochastic games and MDPs with quantitative objectives.

In this paper we extend the MLA technique to solve stochastic games with 
quantitative objectives. 
The MLA technique of~\cite{deAlR07} works for MDPs and the special 
class of qualitative objectives, namely reachability and safety 
objectives (the model is quantitative with probabilities 
but the objectives are qualitative).
We present the MLA technique based abstraction-refinement algorithm for 
both stochastic games and MDPs with discounted objectives.
For long-run average objectives, our solution works for all MDPs and a 
sub-class of stochastic games where every state has the same value.
We note that for long-run average objectives in stochastic games, 
the same assumption (of all states having the same value) is required 
for the \emph{relative value} iteration algorithm to 
work~\cite{Bertsekas95,FilarVrieze97}\footnote{Thus the assumption is necessary even for classical value iteration algorithms and even without abstraction, 
and hence cannot be avoided in our setting with %%of value iteration algorithm with 
abstraction.}.
Hence our result present generalizations of the results of~\cite{deAlR07}
from the sub-class of reachability and safety objectives (which are 
Boolean) to the general class of discounted and long-run average 
objectives (which are quantitative).
An abstraction-refinement based technique was proposed in~\cite{CHJM05}
for turn-based stochastic games with quantitative objectives, but the 
technique of~\cite{CHJM05} does not provide either a useful way to abstract 
probabilities, or the space-saving benefit of the MLA
based technique.
Thus our algorithms provide space-efficient and practical algorithmic 
solutions for a wide class of problems of interest.
To demonstrate the applicability of our algorithms we present a symbolic
implementation of our algoritms for MDPs with discounted objectives.
In Section~\ref{sec-ex} we present many examples to illustrate 
cases where MLA based solution has a clear advantage over the full abstraction
techniques, and our experimental results show that the 
MLA based technique gives a significant space saving.

%%The outline of the remainder of the paper is as follows. 
%%In Section \ref{sec-defs}, we provide mathematical backgrounds for game models and
%%the quantitative properties.
%%In Section \ref{sec-discounted}, we describe the MLA algorithms for discounted objectives.
%%In Section \ref{sec-longrun}, we describe the MLA algorithms for long-run objectives.
%%In Section \ref{sec-ex}, we provide some examples where MLA have more advantages over
%%full abstraction techniques. 
%We conclude in Section \ref{concl}. 

%------------------------------------------------------------------------- 
\section{Preliminaries}\label{sec-defs}
\noindent
For a finite set $S$, a {\em probability distribution\/} on $S$ is a 
function $p : S \to [0,1]$ such that $\sum_{s \in S} p(s) = 1$;
we denote the set of probability distributions on $S$ by $\distr(S)$.
A \emph{valuation} over a set $S$ is a function $v : S \to \reals$
associating a real number $v(s)$ with every $s \in S$. 
%For valuations $v,u$ over $S$, we define operators and inequalities
%in pointwise fashion: for instance, we define $v+u$ by 
%$(v+u)(s) = v(s) + u(s)$ for all $s \in S$, and we
%write $v \leq u$ if $v(s) \leq u(s)$ at all $s \in S$. 
For $x \in \reals$, we denote by $\mathbf{x}$ the valuation with
constant value $x$; for $T \subs S$, we indicate by $[T]$ the
valuation having value~1 in $T$ and~0 elsewhere.  
For two valuations $v,u$ on $S$, we define 
$||v-u|| = \sup_{s \in S} |v(s) - u(s)|$.

A {\em partition\/} of a set $S$ is a set $R \subs 2^S$, such that
$\bigcup_{x \in R} \set{s | s \in x} = S$ and $x \inters x' = \emptyset$ for all $x \neq x' \in R$. 
%% We will define abstractions of the state space $S$ simply via
%% partitions of $S$. 
For $s \in S$ and a partition $R$ of $S$, we denote by $[s]_R$ the
element $x \in R$ with $s \in x$. 
We say that a partition $R$ is {\em finer\/} than a partition $R'$ if
for any $x \in R$ there exists $x'\in R'$ such that $x \subseteq x'$. 

%%\subsection{Turn-based Probabilistic Games}
We consider the class of turn-based probabilistic games and its important 
subclass of Markov decision processes (MDPs).

\smallskip\noindent{\bf Game graphs.} 
A \emph{turn-based probabilistic game graph} 
(\emph{$2\half$-player game graph}) $G =((S, E), (\SA,\SB,\SR),\trans)$ 
consists of a directed graph $(S,E)$, a partition $(\SA$, $\SB$,$\SR)$ of the finite set 
$S$ of states, and a probabilistic transition function $\trans$: $\SR \rightarrow \distr(S)$, 
where $\distr(S)$ denotes the set of probability distributions over the state space~$S$. 
The states in $\SA$ are the {\em player-$\PA$\/} states, where player~$\PA$
decides the successor state; the states in $\SB$ are the {\em 
player-$\PB$\/} states, where player~$\PB$ decides the successor state; 
and the states in $\SR$ are the {\em probabilistic\/} states, where
the successor state is chosen according to the probabilistic transition
function~$\trans$. 
We assume that for $s \in \SR$ and $t \in S$, we have $(s,t) \in E$ 
iff $\trans(s)(t) > 0$, and we often write $\trans(s,t)$ for $\trans(s)(t)$. 
For technical convenience we assume that every state in the graph 
$(S,E)$ has at least one outgoing edge.
For a state $s\in S$, we write $E(s)$ to denote the set 
$\set{t \in S \mid (s,t) \in E}$ of possible successors.
For $s \in S$ and a partition $R$ of $S$, a region $r_2 \in R$ is called successor 
to a region $r_1 \in R$ if at least one concrete state in $r_1$ has non-zero 
probability to reach concrete state(s) in $r_2$.
%
%%The {\em turn-based deterministic game graphs} (\emph{2-player game graphs})
%%are the special case of the $2\half$-player game graphs with $\SR = \emptyset$.
The \emph{Markov decision processes} (\emph{$1\half$-player game graphs}) 
are the special case of the $2\half$-player game graphs with 
$\SA = \emptyset$ or $\SB = \emptyset$. 
We refer to the MDPs with $\SB=\emptyset$ as \emph{player-$\PA$} MDPs,
and to the MDPs with $\SA=\emptyset$ as \emph{player-$\PB$} MDPs.

\smallskip\noindent{\bf Plays and strategies.}
An infinite path, or a \emph{play}, of the game graph $G$ is an 
infinite 
sequence $\pat=\seq{s_0, s_1, s_2, \ldots}$ of states such that 
$(s_k,s_{k+1}) \in E$ for all $k \in \Nats$. 
We write $\Paths$ for the set of all plays, and for a state $s \in S$, 
we write $\Paths_s\subseteq\Paths$ 
for the set of plays that start from the state~$s$.
A \emph{strategy} for  player~$\PA$ is a function 
$\straa$: $S^*\cdot \SA \to \distr(S)$ that assigns a probability 
distribution to all finite sequences $\vec{w} \in S^*\cdot \SA$ of states 
ending in a player-1 state 
(the sequence represents a prefix of a play).
Player~$\PA$ follows the strategy~$\straa$ if in each player-1 
move, given that the current history of the game is
$\vec{w} \in S^* \cdot \SA$, she chooses the 
next state according to the probability distribution $\straa(\vec{w})$.
A strategy must prescribe only available moves, i.e., 
for all $\vec{w} \in S^*$,
$s \in \SA$, and $t \in S$, if $\straa(\vec{w} \cdot s)(t) > 0$, then 
$(s, t) \in E$.
The strategies for player~2 are defined analogously.
We denote by $\Straa$ and $\Strab$ the set of all strategies for player~$\PA$
and player~$\PB$, respectively.

Once a starting state  $s \in S$ and strategies $\straa \in \Straa$
and $\strab \in \Strab$ for the two players are fixed, the outcome
of the game is a random walk $\pat_s^{\straa, \strab}$ for which the
probabilities of events are uniquely defined, where an \emph{event}  
$\Aa \subseteq \Paths$ is a measurable set of plays. 
For a state $s \in S$ and an event $\Aa\subseteq\Paths$, we write
$\Prb_s^{\straa, \strab}(\Aa)$ for the probability that a play belongs 
to $\Aa$ if the game starts from the state $s$ and the players follow
the strategies $\straa$ and~$\strab$, respectively.
For a measurable function $f:\Paths \to \reals$ we denote by 
$\Exp_s^{\straa,\strab}[f]$ the \emph{expectation} of the function
$f$ under the probability measure $\Prb_s^{\straa,\strab}(\cdot)$.

Strategies that do not use randomization are called pure.
A player-1 strategy~$\straa$ is \emph{pure} if for all $\vec{w} \in S^*$
and $s \in \SA$, there is a state~$t \in S$ such that  
$\straa(\vec{w}\cdot s)(t) = 1$. 
%The pure strategies for player~2 are defined analogously.
A \emph{memoryless} player-1 strategy does not depend on the history of 
the play but only on the current state; i.e., for all $\vec{w},\vec{w'} \in 
S^*$ and for all $s \in S_1$ we have 
$\straa(\vec{w} \cdot s) =\straa(\vec{w}'\cdot s)$.
A memoryless strategy can be represented as a function 
$\straa$: $\SA \to \distr(S)$.
A \emph{pure memoryless strategy} is a strategy that is both pure and 
memoryless.
A pure memoryless strategy for player~1 can be represented as
a function $\straa$: $\SA \to S$.
%We denote by $\Straa^{P}\subseteq\Sigma$ the set of pure strategies for 
%player~1; by $\Straa^{M}\subseteq\Sigma$ the set of memoryless 
%player-1 strategies; and by 
We denote by 
$\Straa^{\PM}$ the set of pure memoryless strategies for player~1.
%that is, $\Straa^{\PM}= \Straa^P \cap \Straa^M$.
The pure memoryless player-2 strategies $\Strab^{\PM}$ are
defined analogously.

\smallskip\noindent{\bf Quantitative objectives.} A \emph{quantitative} objective
is specified as a measurable function $f:\Paths \to \reals$.
We consider \emph{zero-sum} games, i.e., games that are 
strictly competitive.
In zero-sum games the objectives of the players are functions $f$ and 
$-f$, respectively. 
We consider two classical quantitative objectives specified as 
discounted sum objective and long-run average (mean-payoff) objectives. 
The definitions of are as follows. 
\begin{itemize}

\item 
\emph{Discounted objectives.} 
Let $r:S\to \rgz$ be a real-valued reward function that assigns to 
every state $s$ the reward $r(s)$, and let $0<\beta<1$ be a discount factor.
The \emph{discounted} objective $\Disc$ assigns to every play the $\beta$-discounted 
sum of the rewards that appears in the play.
Formally, for a play $\pat=\seq{s_0,s_1,s_2,s_3,\ldots}$ we have
$\Disc(\beta,r)(\pat)= \sum_{i=0}^\infty \beta^i\cdot r(s_i)$.

\item \emph{Long-run average objectives.} 
Let $r:S\to \rgz$ be a real-valued reward function that assigns to 
every state $s$ the reward $r(s)$.
The \emph{long-run} average objective $\LimAvg$ assigns to every play the long-run 
average of the rewards that appear in the play.
Formally, for a play $\pat=\seq{s_1,s_2,s_3,\ldots}$ we have
$\LimAvg(r)(\pat)=\liminf_{T \to \infty} \frac{1}{T} \cdot \sum_{i=0}^{T-1} r(s_i)$.
\end{itemize}

\smallskip\noindent{\bf Values and optimal strategies.}
Given a game graph $G$, and quantitative objectives specified as 
measurable functions $f$ and $-f$ for player~1 and player~2, 
respectively, we define the \emph{value} functions
$\va$ and $\vb$ for the players~1 and~2, respectively, as the following 
functions from the state space $S$ to the set $\reals$ of reals:
for all states $s\in S$, let
\[
\va^G(f)(s)  = 
\displaystyle \sup_{\straa \in \Straa} \inf_{\strab \in \Strab} 
\Exp_s^{\straa,\strab}[f]; \quad
\vb^G(-f)(s)  =   
\displaystyle \sup_{\strab\in \Strab} 
\inf_{\straa \in \Straa} \Exp_s^{\straa,\strab}[-f].
\] 
In other words, the values $\va^G(f)(s)$ give the maximal expectation with which player~1 can 
achieve her objective $f$ from state~$s$,
and analogously for player~2.
The strategies that achieve the values are called optimal:
a strategy $\straa$ for player~1  is \emph{optimal} from the state
$s$ for the objective $f$ if 
$\va^G(f)(s)=\inf_{\strab \in \Strab} \Exp_s^{\straa,\strab}[f]$.
The optimal strategies for player~2 are defined analogously.
We now state the classical memoryless determinacy results for $2\half$-player 
games with discounted and long-run average objectives.

\begin{theorem}[Quantitative determinacy~\cite{FilarVrieze97,LigLipp69}] 
\label{thrm:quan-det}
For all $2\half$-player game graphs $G$, the
following assertions hold.
\begin{itemize}
\item
 For all reward functions $r:S \to \rgz$, for all $0<\beta<1$, and all states~$s\in S$, 
 we have 
 \[
  \va^G(\Disc(\beta,r))(s) + \vb^G(\Disc(\beta,-r))(s) =0; 
\]
\[
  \va^G(\LimAvg(r))(s) + \vb^G(\LimAvg(-r))(s) =0.
 \] 
\item Pure memoryless optimal strategies exist for both players from all 
states for discounted and long-run average objectives.
\end{itemize}
\end{theorem}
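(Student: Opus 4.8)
The plan is to realize the discounted value as the unique fixed point of a Bellman operator, and then obtain the long-run average statement as a Tauberian limit of the discounted one.

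\textbf{Discounted case via a contraction.} I would define the operator $\mathsf{T}_\beta$ on valuations $v : S \to \reals$ by
\[
(\mathsf{T}_\beta v)(s) = r(s) + \beta \cdot
\begin{cases}
\max_{t \in E(s)} v(t) & \text{if } s \in \SA,\\
\min_{t \in E(s)} v(t) & \text{if } s \in \SB,\\
\sum_{t \in E(s)} \trans(s,t)\, v(t) & \text{if } s \in \SR.
\end{cases}
\]
Since $\max$, $\min$, and convex combinations are all $1$-Lipschitz in the sup-norm, a direct estimate gives $\|\mathsf{T}_\beta v - \mathsf{T}_\beta u\| \le \beta\,\|v-u\|$, so by the Banach fixed-point theorem $\mathsf{T}_\beta$ has a unique fixed point $v^*$, and $\mathsf{T}_\beta^n v \to v^*$ from any starting $v$. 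The next step is to identify $v^*$ with $\va^G(\Disc(\beta,r))$ and extract strategies: let $\straa^*$ be the pure memoryless player-1 strategy picking at each $s\in\SA$ some $t\in E(s)$ attaining $\max_t v^*(t)$, and $\strab^*$ the analogous arg-min strategy for player 2. Unfolding $v^* = \mathsf{T}_\beta v^*$ for $n$ steps against an arbitrary $\strab$, and using that the tail $\sum_{i\ge n}\beta^i r(s_i)$ is bounded by $\beta^n\, r_{\max}/(1-\beta)\to 0$, shows $\Exp_s^{\straa^*,\strab}[\Disc(\beta,r)] \ge v^*(s)$; symmetrically $\Exp_s^{\straa,\strab^*}[\Disc(\beta,r)] \le v^*(s)$ for all $\straa$. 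Hence $v^* = \va^G(\Disc(\beta,r))$ and $\straa^*,\strab^*$ are optimal pure memoryless strategies. Applying the same construction to $-r$ with the roles of the players swapped, its fixed point is exactly $-v^*$, so $\vb^G(\Disc(\beta,-r)) = -\va^G(\Disc(\beta,r))$, which is the claimed determinacy identity.

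\textbf{Long-run average case via a Tauberian limit.} I would reduce to the discounted case. Fixing a pure memoryless pair $(\straa,\strab)$, the game becomes a finite Markov chain, whose Cesàro averages converge almost surely, so $\LimAvg(r)$ agrees a.s. with an honest limit and $(1-\beta)\,\Exp_s^{\straa,\strab}[\Disc(\beta,r)] \to \Exp_s^{\straa,\strab}[\LimAvg(r)]$ as $\beta\to1$ (the identity $((1-\beta)(I-\beta P)^{-1} r)(s) \to (P^* r)(s)$). Each map $\beta \mapsto \Exp_s^{\straa,\strab}[\Disc(\beta,r)]$ is a rational function of $\beta$, and there are only finitely many pairs in $\Straa^{\PM}\times\Strab^{\PM}$, so the ordering among these finitely many functions is constant on some interval $(\beta_0,1)$. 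Consequently a strategy pair that is optimal for the discounted game for all $\beta\in(\beta_0,1)$ — a Blackwell-optimal pair — exists, is pure memoryless, and by the limit above is optimal for $\LimAvg$ too, with $\va^G(\LimAvg(r))(s) = \lim_{\beta\to1}(1-\beta)\,\va^G(\Disc(\beta,r))(s)$. Determinacy for $\LimAvg$ and pure memoryless optimality then follow by passing $\beta\to1$ in the corresponding discounted statements.

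\textbf{Main obstacle.} The contraction estimate and strategy extraction are routine; the delicate part is the long-run average case, specifically (i) replacing the $\liminf$ in the definition of $\LimAvg$ by a genuine limit under optimal play, which requires the recurrent/transient decomposition of the induced finite Markov chain, and (ii) the Blackwell-optimality argument that the discounted optimizers stabilize as $\beta\to1$. This is where the bulk of the work lies — or, alternatively, the direct appeal to~\cite{FilarVrieze97,LigLipp69}.
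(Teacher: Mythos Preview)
The paper does not prove this theorem at all: it is stated as a classical result and attributed to~\cite{FilarVrieze97,LigLipp69}, with no argument given in the body or the appendix. So there is no ``paper's own proof'' to compare against.

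That said, your sketch is sound and is essentially the standard route taken in those references. The discounted part is exactly Shapley's contraction argument: the Bellman operator is a $\beta$-contraction in the sup-norm, the unique fixed point is the value, and the arg-max/arg-min selectors give pure memoryless optimal strategies; determinacy is then the observation that the operator for $(-r,\text{player~2})$ is the negation of the one for $(r,\text{player~1})$. For the long-run average part, your vanishing-discount (Blackwell) argument is precisely the approach of Liggett--Lippman for perfect-information stochastic games: finitely many pure memoryless pairs, each inducing a rational function of $\beta$, hence a common Blackwell-optimal pair on an interval $(\beta_0,1)$, whose limit yields a pure memoryless optimum for $\LimAvg$ together with determinacy.

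Your identified obstacles are the right ones. One point worth tightening if you write this out in full: in the two-player case the stabilization of the \emph{pair} of optimal strategies as $\beta\to 1$ is not literally an ``ordering of finitely many rational functions'' argument; rather, one uses that the value $v^*_\beta(s)$ itself is piecewise rational in $\beta$ with finitely many pieces (one per optimal pure memoryless pair), so on a right neighborhood of $1$ a single pair realizes the saddle point. This is what your sentence intends, but the wording could be sharpened.
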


We now present the definition of the \emph{predecessor} operator $\Pre$.
The operator $\Pre$ is an important operator that is used in many classical
algorithms to solve $2\half$-player games with discounted and long-run 
average objectives.

\begin{definition}[The predecessor operator (Pre)]
\label{def:pre}
Given a game graph $G=((S, E), (\SA,\SB,\SR),\trans)$, the predecessor operator 
$\Pre$ takes a valuation 
$v {:  } S \to\rgz $ and returns a valuation $\Pre(v) {: } S \to \rgz$ 
defined as follows: for every state $s \in S$ we have
\[
\Pre(v)(s)=
\begin{cases}
 \max_{t \in E(s)} v(t)  &  s \in S_1  \\
 \min_{t \in E(s)} v(t)  &  s \in S_2  \\
 \sum_{t \in S} \trans(s,t) \cdot v(t) & s \in S_p.
\end{cases}
\]
\end{definition}

\section{MLA %%Magnifying Lens Abstraction to Approximate Values 
for Discounted Objectives}\label{sec-discounted}
In this section we present algorithmic solutions for $2\half$-player games 
and MDPs with discounted objectives.

\smallskip\noindent{\bf Classical Algorithms.}
We present the algorithms to solve a turn-based stochastic games 
with discounted objectives.

\begin{theorem}[\cite{FilarVrieze97,Bertsekas95}] 
\label{thrm-classical}
Given a turn-based stochastic game graph $G$, with a reward function 
$r : S \to \rgz$ and a discount factor $0<\beta<1$, the following assertions hold.
\begin{compactenum}
\item \emph{(Value iteration).} 
Consider the sequence of valuations $v_0, v_1, v_2, \ldots$
as follows: let $v_0 = \mathbf{0}$ and for all $i\geq 0$ and 
$s \in S$ we have 
$$v_{i+1}(s) = (1-\beta)\cdot r(s) + \beta\cdot \Pre(v_i)(s).$$
The sequence $(v_i)_{i\geq 0}$ converges monotonically to 
$ \va^G(\Disc(\beta,r))$.

\item  \emph{(Fixpoint solution).} There exists a valuation $v^*$ 
that is the unique fixpoint of the function 
$f(v)(s) = (1-\beta) \cdot r(s) + \beta\cdot\Pre(v)(s)$, i.e.,
for all $s \in S$ we have  
\[
 v^*(s) = (1-\beta)\cdot r(s) + \beta\cdot \Pre(v^*)(s) 
\]   
and we have  $v^* = \va^G(\Disc(\beta,r))$. 
\end{compactenum}
\end{theorem}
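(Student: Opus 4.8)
The plan is to prove the two parts of Theorem~\ref{thrm-classical} together, since both hinge on the fact that the operator $f(v)(s) = (1-\beta)\cdot r(s) + \beta\cdot\Pre(v)(s)$ is a contraction in the $||\cdot||$ norm. First I would establish the key structural property of $\Pre$: it is non-expansive, i.e., $||\Pre(v)-\Pre(u)|| \le ||v-u||$ for all valuations $v,u$. This follows by a case analysis on the three types of states. For $s\in S_p$ it is Jensen-style: $|\sum_t \trans(s,t)(v(t)-u(t))| \le \sum_t \trans(s,t)\,||v-u|| = ||v-u||$. For $s\in S_1$ one uses the elementary inequality $|\max_{t\in E(s)}v(t) - \max_{t\in E(s)}u(t)| \le \max_{t\in E(s)}|v(t)-u(t)| \le ||v-u||$, and symmetrically for $s\in S_2$ with $\min$. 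Taking the supremum over $s$ gives non-expansiveness of $\Pre$, and hence $||f(v)-f(u)|| = \beta\cdot||\Pre(v)-\Pre(u)|| \le \beta\cdot||v-u||$, so $f$ is a $\beta$-contraction with $\beta<1$.

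Given the contraction property, part~(2) is immediate from the Banach fixpoint theorem applied to the complete metric space of bounded valuations on the finite set $S$ under $||\cdot||$: there is a unique fixpoint $v^*$ with $f(v^*)=v^*$, and the iterates $f^i(v_0)$ converge to $v^*$ from any starting point, in particular from $v_0=\mathbf{0}$ — which establishes the convergence claim of part~(1) as well, except for monotonicity. For monotonicity I would argue separately: $\Pre$ is order-preserving (if $v\le u$ pointwise then $\Pre(v)\le\Pre(u)$, again by the three-case analysis), hence so is $f$; and $v_1 = f(\mathbf{0}) = (1-\beta)r \ge \mathbf{0} = v_0$ since rewards are non-negative ($r:S\to\rgz$). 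By induction, applying the order-preserving $f$ to $v_i \ge v_{i-1}$ gives $v_{i+1}\ge v_i$, so $(v_i)_{i\ge 0}$ is monotonically non-decreasing and, being convergent, converges monotonically to its limit $v^*$.

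The remaining and genuinely substantive step is identifying the fixpoint $v^*$ with the game value $\va^G(\Disc(\beta,r))$; this is the main obstacle, as the rest is soft analysis. Here I would invoke the memoryless determinacy of Theorem~\ref{thrm:quan-det}: since pure memoryless optimal strategies exist for both players, one can fix the optimal pure memoryless strategies $\straa^*,\strab^*$. Under a fixed pair of pure memoryless strategies the game reduces to a Markov chain on $S$, for which the discounted value is the well-known unique solution of the linear system $w(s) = (1-\beta)r(s) + \beta\sum_t P(s,t)w(t)$, where $P$ is the induced transition matrix; this is exactly the statement that $w$ is a fixpoint of the corresponding (linear) operator. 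One then shows that $v^*$, being the fixpoint of $f$, satisfies the one-step optimality conditions — at each $s\in S_1$ the $\max$ is attained, giving a best-response move, and symmetrically at $s\in S_2$ — so that $v^*$ is simultaneously an upper bound on what player~2 can prevent and a lower bound on what player~1 can guarantee. Formally: from any strategy of player~2, player~1 playing the stationary strategy extracted from the $\arg\max$ in $f$ guarantees expected payoff $\ge v^*$ (prove by showing $v^*$ is dominated below along the induced chain via the contraction/monotonicity argument again, taking a limit of finite-horizon truncations), and dually player~2 can hold player~1 to $\le v^*$. Combining the two inequalities with the definition of $\va^G$ and $\vb^G$ and the determinacy identity $\va^G(\Disc(\beta,r)) + \vb^G(\Disc(\beta,-r)) = 0$ yields $v^* = \va^G(\Disc(\beta,r))$, completing both parts.
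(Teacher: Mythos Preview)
The paper does not prove Theorem~\ref{thrm-classical}: it is stated as a classical result with citations to \cite{FilarVrieze97,Bertsekas95}, and no argument is given in either the body or the appendix. There is therefore nothing to compare your proposal against on the paper's side.

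That said, your proof is a faithful reconstruction of the standard textbook argument and is correct. The contraction step (non-expansiveness of $\Pre$ case-by-case, then $\beta$-contraction of $f$, then Banach) is exactly how one obtains existence, uniqueness, and convergence of iterates. Your monotonicity argument correctly uses that $f$ is order-preserving and that $v_1 = (1-\beta)r \ge \mathbf{0} = v_0$; note this relies on $r \ge 0$, which matches the hypothesis $r:S\to\rgz$ in the statement. The identification $v^* = \va^G(\Disc(\beta,r))$ via extracting stationary $\arg\max/\arg\min$ strategies from the fixpoint equation and bounding both directions is also the standard route; invoking Theorem~\ref{thrm:quan-det} here is reasonable, though strictly speaking the two-sided bound you sketch already yields $\sup_\straa\inf_\strab = \inf_\strab\sup_\straa = v^*$ directly, so determinacy need not be assumed separately.
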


\smallskip\noindent{\em The classical algorithms.} The 
classical algorithms for solving turn-based stochastic games are
based on the result of Theorem~\ref{thrm-classical} and are
as follows.
\begin{compactenum}
\item We obtain the sequence of valuations $(v_i)_{i\geq 0}$ as
given by Theorem~\ref{thrm-classical} by iterating over the 
valuations, and the sequence converges (w.r.t. an error tolerance 
$\vefloat$) to the desired value of
the game.

\item The fixpoint $v^*$ that gives the desired value of the game 
can be obtained by solving optimization problems: if the game
graph is an MDP, then it can be obtained from the solution of a
linear-programming problem~\cite{Manne60}, and for general turn-based 
stochastic games it can be obtained as a solution of a quadratic
programming problem~\cite{Rothblum78}.
\end{compactenum}

\smallskip\noindent{\em Abstract properties for upper and lower bound of 
value functions.}
We first present certain abstract properties of functions that can be
used to obtain upper and lower bounds on the value of stochastic 
game with a discounted objectives.
Later we will present a concrete functions that satisfies
the abstract properties and can be implemented by the magnifying lens 
abstraction techniques.

\begin{theorem}\label{thrm-basic-property}
Let $G$ be turn-based stochastic game graph with 
reward function $r : S \to \rgz$ and discount factor $\beta$.
Let $M=\max_{s \in S} |r(s)|$ and let $Q=\frac{M}{1-\beta}$. 
Consider the  function $f$ on valuations such that 
\[
f(v)(s) = (1-\beta)\cdot r(s) + \beta\cdot \Pre(v)(s). 
\]
Let $f_+$ and $f_{-}$ be two functions on valuations 
that satisfy the following conditions:
\begin{compactenum}
\item $f_+$ and $f_{-}$ are monotonic;
\item for all valuations $v$ we have 
$f_{-}(v) \leq f(v) \leq f_+(v)$;

\item for all valuations bounded by $Q$ (i.e., for all $s \in S$ we 
have $-Q \leq v(s) \leq Q$) we have 
$-Q \leq f_{-}(v) \leq f_+(v) \leq Q$. 
\end{compactenum}
Then there exist least fixpoints $v^*_+$ and $v^*_{-}$ of $f_+$ and $f_{-}$ and
$v^*_{-} \leq \va^G(\Disc(\beta,r)) \leq v^*_+$.
\end{theorem}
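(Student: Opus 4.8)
The plan is to establish the existence of the least fixpoints $v^*_+$ and $v^*_-$ via the Knaster–Tarski theorem on an appropriate complete lattice, and then to sandwich the true value $\va^G(\Disc(\beta,r))$ between them using monotonicity and the pointwise bounds $f_- \le f \le f_+$.

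First I would fix the lattice. Let $L$ be the set of valuations $v : S \to \reals$ with $-Q \le v(s) \le Q$ for all $s$, ordered pointwise. Since $S$ is finite, $L$ is a complete lattice (bounded, pointwise suprema and infima exist and stay within the bounds). Condition~3 says $f_+$ and $f_-$ map $L$ into $L$, and condition~1 says they are monotonic on $L$. Hence Knaster–Tarski applies: $f_+$ has a least fixpoint $v^*_+ \in L$ and $f_-$ has a least fixpoint $v^*_- \in L$. It is worth recording that $v^*_+ = \bigsqcup_i f_+^i(\mathbf{-Q})$ and $v^*_- = \bigsqcup_i f_-^i(\mathbf{-Q})$, i.e.\ each least fixpoint is the limit of iterating from the bottom element $\mathbf{-Q}$ of $L$; this iterative description is what lets me compare the three sequences step by step.

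Next I would locate the true value inside $L$. By Theorem~\ref{thrm-classical}, $\va^G(\Disc(\beta,r))$ is the unique fixpoint $v^*$ of $f$, and it is the monotone limit of $v_0 = \mathbf{0}, v_{i+1} = f(v_i)$. Since $|r(s)| \le M$, an easy bound gives $|v^*(s)| = |\sum_{j\ge0}(1-\beta)\beta^j r(\cdot)| \le (1-\beta)\sum_{j\ge0}\beta^j M = M = (1-\beta)Q \le Q$, so $v^* \in L$. Now compare iterations. Starting all three from the bottom $\mathbf{-Q}$: because $f_-(v) \le f(v) \le f_+(v)$ for every $v$ (condition~2) and all three maps are monotonic, induction on $i$ gives $f_-^i(\mathbf{-Q}) \le f^i(\mathbf{-Q}) \le f_+^i(\mathbf{-Q})$ for all $i$. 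The left and right sequences converge to $v^*_-$ and $v^*_+$ respectively; the middle sequence $f^i(\mathbf{-Q})$ converges to $v^*$ — either directly (it is a contraction with factor $\beta$, so it converges from any starting point, in particular from $\mathbf{-Q}$ instead of $\mathbf{0}$, to the same unique fixpoint) or by squeezing the monotone-from-$\mathbf{0}$ sequence. Taking limits preserves the inequalities, yielding $v^*_- \le v^* = \va^G(\Disc(\beta,r)) \le v^*_+$, which is the claim.

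The main obstacle is the comparison step for the middle sequence: Theorem~\ref{thrm-classical} states convergence of $f^i$ only from the specific start $v_0 = \mathbf{0}$, whereas here I need to iterate all three maps from a common bottom element $\mathbf{-Q}$ so that the pointwise inequalities can be propagated by induction. I would handle this by invoking that $f$ is a $\beta$-contraction in the sup-norm (immediate from the definitions of $f$ and $\Pre$, since $\|\Pre(v)-\Pre(u)\| \le \|v-u\|$), which forces $f^i$ to converge to the unique fixpoint $v^*$ from any starting valuation in $L$, including $\mathbf{-Q}$. An equally valid alternative, avoiding contraction mappings, is to note $\mathbf{-Q} \le \mathbf{0}$ and use monotonicity of $f$ together with $v^* \in L$ to squeeze $f^i(\mathbf{-Q})$ between $f^i(\mathbf{-Q})$-type bounds and the known convergent sequence from $\mathbf{0}$; either route closes the gap. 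Everything else — the lattice structure, Knaster–Tarski, the inductive inequality, and passing to the limit — is routine.
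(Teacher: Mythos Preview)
Your overall approach matches the paper's: both invoke Knaster--Tarski on the lattice of $Q$-bounded valuations to obtain the least fixpoints, and both deduce the sandwich $v^*_- \le v^* \le v^*_+$ from $f_- \le f \le f_+$. The paper's own proof is in fact a two-sentence sketch that merely asserts these two steps, so your write-up is considerably more explicit than what appears there.

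There is, however, a gap in your comparison step. You assert that $v^*_+ = \bigsqcup_i f_+^i(\mathbf{-Q})$ and $v^*_- = \bigsqcup_i f_-^i(\mathbf{-Q})$, but this is the Kleene fixpoint theorem, not Knaster--Tarski, and it requires $\omega$-continuity of $f_+$ and $f_-$, which is not among the hypotheses (only monotonicity is assumed). Without continuity the increasing sequence $f_-^i(\mathbf{-Q})$ converges to some $w_- \le v^*_-$ that need not be a fixpoint at all, and from $w_- \le v^*$ you cannot conclude $v^*_- \le v^*$. (The upper bound happens to survive this defect, since $f_+^i(\mathbf{-Q}) \le v^*_+$ holds for all $i$ by monotonicity alone, so $v^* = \lim_i f^i(\mathbf{-Q}) \le v^*_+$ still follows; it is only the lower bound whose conclusion is in jeopardy.) The clean fix avoids iteration entirely and uses the pre-fixpoint characterisation that Knaster--Tarski actually gives you: since $f_-(v^*) \le f(v^*) = v^*$, the valuation $v^*$ is a pre-fixpoint of $f_-$, and the least fixpoint $v^*_- = \inf\{v \in L : f_-(v) \le v\}$ lies below every pre-fixpoint, yielding $v^*_- \le v^*$ directly. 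Symmetrically, $f(v^*_+) \le f_+(v^*_+) = v^*_+$ makes $v^*_+$ a pre-fixpoint of $f$, and uniqueness of the fixpoint of $f$ (Theorem~\ref{thrm-classical}) then gives $v^* \le v^*_+$.
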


In the following  we will use the magnifying lens abstraction
techniques to define functions $f_+$ and $f_{-}$ that satisfies the
properties of the above theorem. 
This will allow us to obtain efficient solution of turn-based stochastic 
games with abstraction techniques.

\smallskip\noindent{\bf Magnifying Lens Abstraction Algorithm.}
{\em Magnifying-lens abstractions\/} (MLA) is a semi-abstract technique that
can cluster states based on value only, disregarding differences in their transition relation.
Let $v^*$ be the discounted sum valuation over $S$ that is to be computed.
Given a desired accuracy $\veabs {>} 0$, MLA
computes upper and lower bounds for $v^*$, spaced less than $\veabs$ apart.
\begin{algorithm}[htb]
\caption{MLA$(G, \beta, r, \veabs, \vefloat)$ Magnifying-Lens Abstraction}
\label{algo-mla}
\begin{tabbing}
{\bf Input} : game $G$, discount factor $\beta$, \\
\qquad \quad \ reward function $r : S \to \rgz$, \\
\qquad \quad \ errors $\veabs > 0$, $\vefloat \ge 0$\\
{\bf Output} :  final partition $R$, valuations $u^+, u^- : R \to\rgz$\\
1.  \ \= $R {:=}$ some initial partition. \\
2.  \>  $u^- {:=} \mathbf{0}$; $u^+ {:=} \mathbf{0}$\\
3.  \> {\bf loop} \\
4.  \> \quad   $u^+$ {:=} $u^-$\\
5.  \> \quad $u^+ {:=}$ GlobalValIter$(G,R,u^+, \beta, r, \max, \vefloat)$\\
6. \> \quad  $u^- {:=}$ GlobalValIter$(G,R,u^-, \beta, r, \min, \vefloat)$\\  
7. \> \quad {\bf if} $||u^+ - u^-|| \geq \veabs$ \\
8. \> \quad \quad {\bf then} $R, u^-, u^+ {:=}$ SplitRegions$(R,u^-,u^+,\veabs)$ \\
9. \> \quad \quad {\bf else} {\bf return} $R, u^-, u^+$ \\
10. \> \quad {\bf end if} \\
11. \> {\bf end loop}
\end{tabbing}
\vspace*{-2ex}
\end{algorithm}

\smallskip\noindent{\em Algorithm Sketch.}
The MLA algorithm is shown in Algorithm~\ref{algo-mla}.
The algorithm has parameters $G$, $\beta$, $r$,  and errors $\veabs>0$, $\vefloat \ge 0$.
Parameter $\veabs$ indicates the allowed maximum difference between the lower
and upper bounds returned by MLA. 
MLA starts from an initial partition (set of regions) $R$ of $S$.
The initial partition $R$ is obtained either from the user or from the property. 
Statement 2 initializes the valuations $u^-$ and $u^+$ to  $\mathbf{0}$
since discounted sums are computed as least fixpoints. 
 MLA computes the lower and upper bounds as valuations 
$u^-$ and $u^+$ over $R$ by GlobalValIter Algorithm (Algorithm~\ref{algo-global}). 
Global iterations, when implemented as a value iteration (Algorithm~\ref{algo-global}), contains an extra parameter $\vefloat {>} 0$. 
Parameter $\vefloat$, stopping parameter of classical value iteration, specifies the degree of precision to which the global value iteration should converge.
For accurate global iterations, we can set the parameter $\vefloat$ to $0$.
The partition is refined, until the difference between $u^-$ and
$u^+$, for all regions, is below a specified threshold. 
%which have the
%same meaning as ValIter in Algorithm~\ref{valueiteration}. 
%

\smallskip\noindent{\em Global Value Iteration (GlobalValIter).} 
To compute $u^-$ (resp. $u^+$), GlobalValIter considers each region $x \in R$ in turn, 
and performs a {\em magnified iteration (MI):\/} it improves the
bounds $u^-(x)$ (resp. $u^+(x)$) by solving the sub-games on the concrete states in $r$.
\begin{algorithm}[htb]
\caption{GlobalValIter$(G, R, u, \beta, r, h, \vefloat)$ Global Value Iteration
}
\label{algo-global}
\begin{tabbing}
{\bf Input} : game $G$, partition $R$, valuation $u : R \to \rgz$, \\
\qquad \quad  \ discount factor $\beta$, reward function $r : S \to \rgz$, \\
\qquad \quad \ $h \in \set{\max,\min}$, error $\vefloat \ge 0$\\
{\bf Output} :  valuation $u : R \to \rgz$\\
1.  \=  {\bf repeat} \\
2.  \>  \quad $\hat{u} {:=} u$\\
3.  \> \quad  {\bf for} $x \in R$ {\bf do}\\
4.  \> \quad \quad $u(x) {:=}$ MagIter$(G,R,x,\hat{u},\beta,r,h,\vefloat)$\\
5.  \> \quad {\bf end for} \\
6. \> {\bf until} $||u - \hat{u}|| \leq \vefloat$ \\
7. \> {\bf return} u
\end{tabbing}
\vspace*{-2ex}
\end{algorithm}

\smallskip\noindent{\em MagnifiedIteration (MagIter).} 
The goal of the magnified iteration algorithm is to 
either (a) iterate function $f_+$ and $f_{-}$ with properties
of Theorem~\ref{thrm-basic-property} or (b) obtain 
fixpoints of $f_+$ and $f_{-}$. 
To obtain the desired functions we define an auxiliary function
$g$ and a magnified predecessor operator $\MPre$ and then
present the magnifying lens abstraction implementation of 
$\MPre$.

%\mynote{Krish to Pritam: I do not understand the paragraph below} 
%For a given region $x$, {\em Magnified Iteration (MI)} computes the value of region $x$.
%For $s \in S \setm x$, the procedure uses, in place of the
%value $v(s)$ which is not available, the value $u^-(x')$ or $u^+(x')$,
%as appropriate, where $x'$ is such that $s \in x'$.
%In other words, the procedure replaces values at concrete states
%outside $x$ with the ``abstract'' values of the regions to which the
%states belong. 
%To this end, we need to be able to efficiently find the ``abstract''
%counterpart $[s]_R$ of a state $s \in S$.

\begin{definition}
Given a game graph $G=((S, E), (\SA,\SB,\SR),\trans)$, two states $s,t \in S$, 
a partition $R$, a valuation $v : S \to \rgz$, $h \in \set{\max, \min}$, 
we define the following auxiliary function $g$ as follows:
\[
g(s,h,R,v)(t) = \ \left\{
\begin{array}{ll}
 {\displaystyle  v(t)} &  t \in [s]_R\\
 {\displaystyle  h\{v(t') \mid t' \in [t]_R\}} &  t \not\in [s]_R\\
\end{array}\right.
\] 
\end{definition}

The function $g$ is as follows: given two states $s$ and $t$, 
a valuation $v$, a partition $R$ and a function $h\in \set{\max,\min}$,
it returns the valuation $v(t)$ if $s$ and $t$ belong to the same partition,
otherwise it returns the result of applying $h$ to the values $v(t')$ 
of the states $t'$ that belongs to the same region as $t$.
We now define the magnified predecessor operator $\MPre$ that is similar to 
$\Pre$ but applies the function $g$ to obtain values.

\begin{definition}[Magnified Predecessor Operator ($\MPre$)]\label{defi-mpre}
Given a game graph $G=((S, E), (\SA,\SB,\SR),\trans)$, a partition $R$,  
a valuation  $v : S \to \rgz$, $h \in \set{\max,\min}$, 
we define the valuation $\MPre(h,v,R) : S \to \rgz$ as follows: 
let $z$ represent  $(s,h,R,v)$, then for  all states $s \in S$, we have
\[
 \MPre(h,v,R)(s) =  \ \left\{ 
\begin{array}{ll}
  {\displaystyle  \max_{t \in E(s)} g(z)(t)} &  s \in S_1 \\
   {\displaystyle  \min_{t \in E(s)} g(z)(t)}  &  s \in S_2\\
   {\displaystyle \sum_{t \in E(s)} \trans(s,t) \cdot g(z)(t)} & s \in S_p
\end{array}\right.
\]
\end{definition}

\begin{lemma}[Properties of $\MPre$]\label{lemm-1-app}
Given a game graph $G$, for all partitions $R$ and all 
$h \in \set{\max,\min}$, we have
\begin{compactenum}
\item $\MPre(h,v,R)$ is monotonic i.e. for two valuations $v, v'$, if 
$v \le v'$, then $\MPre(h,v,R) \le \MPre(h,v',R)$.
\item If valuation $v$ is bounded by $Q$, then $\MPre(h,v,R)$ is also bounded 
 by $Q$.
\item If $h$ is $\max$, then $\Pre(v) \le \MPre(h,v,R)$, and if $h$ is $\min$, 
then $\Pre(v) \ge \MPre(h,v,R)$.
\end{compactenum} 
\end{lemma}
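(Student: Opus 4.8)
The plan is to prove each of the three properties of $\MPre$ directly from the definitions of $\MPre$ and the auxiliary function $g$, handling the three cases $s \in S_1$, $s \in S_2$, $s \in S_p$ separately where needed. The key observation throughout is that $g(s,h,R,v)(t)$ is built from the values of $v$, either by copying a single value (when $t \in [s]_R$) or by applying $\max$ or $\min$ over a region (when $t \not\in [s]_R$); since copying, $\max$, and $\min$ are all monotonic in $v$, and since $\Pre$-style aggregation ($\max$, $\min$, convex combination) preserves monotonicity, property~1 should follow by composing these facts. Concretely, for property~1 I would first show the pointwise inequality $g(s,h,R,v)(t) \le g(s,h,R,v')(t)$ for every $t$ whenever $v \le v'$: if $t \in [s]_R$ this is just $v(t) \le v'(t)$, and if $t \not\in [s]_R$ it is $h\{v(t') \mid t' \in [t]_R\} \le h\{v'(t') \mid t' \in [t]_R\}$, which holds for $h \in \{\max,\min\}$ because each argument increases. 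Then apply the outer operation ($\max_{t\in E(s)}$, $\min_{t \in E(s)}$, or $\sum_t \trans(s,t)\cdot(\cdot)$), each of which is monotonic, to conclude $\MPre(h,v,R)(s) \le \MPre(h,v',R)(s)$.

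For property~2, I would argue that $g(s,h,R,v)(t) \in [-Q,Q]$ for every $t$ whenever $v$ is bounded by $Q$: in the first case $g = v(t) \in [-Q,Q]$, and in the second case $g$ is a $\max$ or $\min$ over a nonempty set of values each in $[-Q,Q]$, hence in $[-Q,Q]$. Then note that $\MPre(h,v,R)(s)$ is either a $\max$/$\min$ over $\{g(z)(t) : t \in E(s)\}$ (a nonempty set, since every state has an outgoing edge) or a convex combination $\sum_{t} \trans(s,t)\cdot g(z)(t)$ with $\sum_t \trans(s,t) = 1$; in all three cases the result stays in $[-Q,Q]$, i.e., $\MPre(h,v,R)(s)$ is bounded by $Q$.

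Property~3 is the part I expect to be the main obstacle, because it requires comparing $\Pre(v)$ and $\MPre(h,v,R)$ pointwise and the sign of the inequality depends on $h$. The core lemma to establish is: for every state $s$ and every $t \in S$, if $h = \max$ then $v(t) \le g(s,\max,R,v)(t)$, and if $h = \min$ then $v(t) \ge g(s,\min,R,v)(t)$. This is immediate from the definition of $g$: when $t \in [s]_R$ equality holds, and when $t \not\in [s]_R$ we have $g(z)(t) = \max\{v(t') : t' \in [t]_R\} \ge v(t)$ (since $t \in [t]_R$) in the $\max$ case, and symmetrically $g(z)(t) = \min\{v(t') : t' \in [t]_R\} \le v(t)$ in the $\min$ case. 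Given this pointwise domination of $v$ by $g(z)$ (resp.\ domination of $g(z)$ by $v$), I would then push it through the three cases of the operators. For $s \in S_1$ both $\Pre$ and $\MPre$ apply $\max_{t \in E(s)}$, which is monotonic, so $\Pre(v)(s) = \max_{t}v(t) \le \max_{t} g(z)(t) = \MPre(\max,v,R)(s)$; similarly for $s \in S_p$ the convex combination is monotonic. The delicate case is $s \in S_2$ with $h = \max$ (and symmetrically $s \in S_1$ with $h = \min$): here $\Pre(v)(s) = \min_{t\in E(s)} v(t)$ and $\MPre(\max,v,R)(s) = \min_{t \in E(s)} g(z)(t)$, and since $v(t) \le g(z)(t)$ for every $t$, monotonicity of $\min$ still gives $\Pre(v)(s) \le \MPre(\max,v,R)(s)$ — so in fact the $\min$ over player-2 successors causes no trouble, because the pointwise bound holds uniformly in $t$ regardless of which player owns $s$. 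The symmetric argument with all inequalities reversed handles $h = \min$. I would write this out by noting that for $h = \max$, in each of the three branches $\MPre(h,v,R)(s)$ is obtained by applying the same monotonic outer operator to $g(z)$ that $\Pre(v)(s)$ applies to $v$, and $v \le g(z)$ pointwise, so $\Pre(v)(s) \le \MPre(h,v,R)(s)$; for $h = \min$, $g(z) \le v$ pointwise and the same monotonicity argument yields $\Pre(v)(s) \ge \MPre(h,v,R)(s)$.
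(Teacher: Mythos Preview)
Your proposal is correct and matches the paper's approach: the paper simply states that the properties are straightforward to verify from the definition of $\MPre$, and your write-up supplies exactly those details (pointwise bounds on $g$ followed by monotonicity of the outer $\max$/$\min$/convex-combination operators). In particular, your observation that the ``delicate'' $S_2$ case with $h=\max$ is in fact harmless because the inequality $v(t)\le g(z)(t)$ holds uniformly in $t$ is the right way to see it.
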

The above lemma shows that $\MPre$ with $h$ as $\max$ and $\min$, respectively, 
satisfies all the properties of $f_{+}$ and $f_{-}$ of Theorem~\ref{thrm-basic-property},
respectively. 
Hence we obtain the following lemma.

\begin{lemma}\label{lemm-2-app}
Given a game $G$, for all partitions $R$, and all valuations $v: S \to \rgz$, 
consider the following functions: 
\[
 l_+(v)(s) = \beta\cdot r(s) + (1-\beta)\cdot \MPre(\max,v,R)(s); 
\]
\[
 l_-(v)(s) = \beta\cdot r(s) + (1-\beta)\cdot \MPre(\min,v,R)(s). 
\]
Then there exist least fixpoints $v^*_+$ and $v^*_-$ of $l_+$ and $l_-$, respectively,
such that $v^*_- \le \va^G(\Disc(\beta,r)) \le v^*_+$.
\end{lemma}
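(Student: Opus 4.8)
The plan is to derive Lemma~\ref{lemm-2-app} as a direct consequence of Theorem~\ref{thrm-basic-property} together with Lemma~\ref{lemm-1-app}. Note first a small bookkeeping point: the functions in Lemma~\ref{lemm-2-app} are written as $l_\pm(v)(s) = \beta\cdot r(s) + (1-\beta)\cdot\MPre(\cdot,v,R)(s)$, whereas the function $f$ of Theorem~\ref{thrm-basic-property} uses the weights $(1-\beta)$ on $r$ and $\beta$ on $\Pre$; I would either treat this as a typo and use the consistent weighting $f_\pm(v)(s) = (1-\beta)\cdot r(s) + \beta\cdot\MPre(\cdot,v,R)(s)$, or remark that the argument is insensitive to which convention is adopted since the two bounding properties of $\MPre$ are preserved under either affine combination. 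For definiteness I will take $f_+(v)(s) = (1-\beta)\cdot r(s) + \beta\cdot\MPre(\max,v,R)(s)$ and $f_-(v)(s) = (1-\beta)\cdot r(s) + \beta\cdot\MPre(\min,v,R)(s)$.

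The key steps are then the following. First, verify condition~(1) of Theorem~\ref{thrm-basic-property}: monotonicity of $f_+$ and $f_-$. Since $r$ is fixed and $\beta > 0$, $f_\pm$ is monotonic in $v$ iff $v \mapsto \MPre(h,v,R)$ is, which is exactly part~(1) of Lemma~\ref{lemm-1-app}. Second, verify condition~(2): the sandwich $f_-(v)\le f(v)\le f_+(v)$ for all $v$. Since $f(v)(s) = (1-\beta)\cdot r(s) + \beta\cdot\Pre(v)(s)$ and $\beta>0$, this reduces to $\MPre(\min,v,R)\le\Pre(v)\le\MPre(\max,v,R)$, which is precisely part~(3) of Lemma~\ref{lemm-1-app}. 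Third, verify condition~(3): if $\|v\|\le Q$ then $-Q\le f_-(v)\le f_+(v)\le Q$. Here I would use part~(2) of Lemma~\ref{lemm-1-app} to get $\|\MPre(h,v,R)\|\le Q$, and then observe that $|f_\pm(v)(s)| \le (1-\beta)\cdot M + \beta\cdot Q = (1-\beta)\cdot M + \beta\cdot\frac{M}{1-\beta}$; a one-line computation shows this equals $Q = \frac{M}{1-\beta}$, so $f_\pm$ maps the $Q$-ball into itself; combined with condition~(2) and the fact that $\va^G(\Disc(\beta,r))$ itself lies in the $Q$-ball (being a discounted sum of rewards bounded by $M$), the ordering $f_-(v)\le f_+(v)$ on that ball follows. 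With conditions~(1)--(3) established, Theorem~\ref{thrm-basic-property} applies verbatim with this $f_+$ and $f_-$, yielding least fixpoints $v^*_+$ and $v^*_-$ with $v^*_-\le\va^G(\Disc(\beta,r))\le v^*_+$, which is the claim.

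I do not expect a genuine obstacle here — the lemma is essentially a corollary, and all three hypotheses of Theorem~\ref{thrm-basic-property} are handed to us by the three parts of Lemma~\ref{lemm-1-app}. The one place requiring mild care is the boundedness/self-map computation $(1-\beta)M + \beta Q = Q$ in condition~(3), and reconciling the $\beta$ versus $1-\beta$ weighting discrepancy noted above; neither is difficult, but both should be stated explicitly so the reader sees that the constant $Q = M/(1-\beta)$ is exactly the right radius for $f_\pm$ to be a self-map of the $Q$-ball. Everything else is an immediate invocation of the previously established results.
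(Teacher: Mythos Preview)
Your proposal is correct and follows precisely the paper's approach: the paper's proof consists of two sentences stating that Lemma~\ref{lemm-1-app} verifies the hypotheses of Theorem~\ref{thrm-basic-property} for $l_-$ and $l_+$, and then invokes that theorem. Your write-up is a more detailed unpacking of the same argument---matching each condition~(1)--(3) of Theorem~\ref{thrm-basic-property} to the corresponding part of Lemma~\ref{lemm-1-app} and carrying out the self-map computation $(1-\beta)M+\beta Q=Q$ explicitly---and you are right to flag the $\beta$ versus $1-\beta$ coefficient swap as a typo relative to the convention used elsewhere in the paper.
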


%\mynote{Krish Comments to Pritam: 
%Comment: 1 . Instead of $MPre$ use the macro $\MPre$, similarly, 
%instead of $Pre$ use a macro $\Pre$ consistently everywhere.
%}
%\mynote{
%Comment 2: Define $\hat{\MPre}$ the mathematical object very clearly, like we 
%have defined $\MPre$, then say why it takes less space, and then present a lemma
%relating $\MPre$ and $\hat{\MPre}$.
%}

%
\smallskip\noindent{\em Magnified Iteration Implementation.}
We now present the implementation details of the magnified iteration techniques. 
The operator $\MPre$ takes as input a valuation over the whole state-space $S$,
and returns a valuation over the whole state space.
In the magnifying lens abstraction implementation, our goal is to save space,
and operate on valuations that are not on the whole state space.
To achieve this goal, for a given region $x \in R$, we define a new operator $\MPrex$ and present its relation with 
$\MPre$.

\begin{definition}[$\MPrex$]\label{defi-mprex}
Given a game graph $G=((S, E), (\SA,\SB,\SR),\trans)$, a partition $R$,  
a region $x \in R$,  valuations  $u : R \to\rgz$, $v_x : x \to \rgz$,  
we define the valuation $\MPrex(v_x,R,u) : x \to \rgz$ as follows: 
for  all states $s \in x$, we have
\[
 \MPrex(v_x,R,u)(s) =  \ \left\{ 
\begin{array}{ll}
  {\displaystyle  \max_{t \in E(s)} \hat{g}(y)(t)} &  s \in S_1 \\
   {\displaystyle  \min_{t \in E(s)} \hat{g}(y)(t)}  &  s \in S_2\\
   {\displaystyle \sum_{t \in E(s)} \trans(s,t) \cdot \hat{g}(y)(t)} & s \in S_p
\end{array}\right.
\]
where $y$ represents  $(s,R,v_x,u)$. The auxiliary function $\hat{g}$ can be defined as follows:
\[
\hat{g}(s,R,v_x,u)(t) = \ \left\{
\begin{array}{ll}
  v_x(t) &  t \in [s]_R\\
 u([t]_R) &  t \not\in [s]_R\\
\end{array}\right.
\]
\end{definition}

Observe that $\MPrex$  
takes a valuation on the states of a region $x$ (instead of a valuation on the whole 
state space),
and a valuation on the partition of the state space (and hence
requires much smaller memory than a valuation on the whole 
state space).
The following lemma establishes the relation of $\MPre$ and
$\MPrex$.
Hence we always achieve the implementation of the $\MPre$ 
operator as $\MPrex$.

\begin{lemma}[Relation of $\MPre$ and $\MPrex$]\label{lemm-3-app}
Given a game graph $G$, for all partitions $R$ and all 
$h \in \set{\max,\min}$, for all valuations $v:S \to \rgz$, 
for all $x \in R$, let $v_x: x \to \rgz$ be a valuation such 
that $v_x(s)=v(s)$ for all $s \in x$, and let $u: R\to \rgz$
be a valuation such that $u(x)= h\set{v(s) \mid s \in x}$.
Then we have $\MPrex(v_x,R,u)(s) = \MPre(h,v,R)(s)$ for 
all $s \in x$.
\end{lemma}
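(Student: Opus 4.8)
The plan is to prove the identity $\MPrex(v_x,R,u)(s) = \MPre(h,v,R)(s)$ pointwise, by unfolding both definitions at an arbitrary state $s \in x$ and showing that the inner quantities $\hat{g}(s,R,v_x,u)(t)$ and $g(s,h,R,v)(t)$ agree for every successor $t \in E(s)$. Once the inner functions agree, the outer aggregation ($\max$ over $E(s)$ if $s \in S_1$, $\min$ if $s \in S_2$, the weighted sum with $\trans(s,t)$ if $s \in S_p$) is syntactically identical in Definition~\ref{defi-mpre} and Definition~\ref{defi-mprex}, so the two operators produce the same value at $s$. Since this holds for every $s \in x$, the lemma follows.

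The heart of the argument is therefore the claim $\hat{g}(s,R,v_x,u)(t) = g(s,h,R,v)(t)$ for all $t \in S$, which I would prove by a case split on whether $t \in [s]_R$. First, since $s \in x$ and $R$ is a partition, $[s]_R = x$. In the case $t \in [s]_R = x$: by definition $\hat{g}(s,R,v_x,u)(t) = v_x(t)$ and $g(s,h,R,v)(t) = v(t)$, and these are equal by the hypothesis $v_x(t) = v(t)$ for all $t \in x$. In the case $t \notin [s]_R$: by definition $\hat{g}(s,R,v_x,u)(t) = u([t]_R)$ and $g(s,h,R,v)(t) = h\{v(t') \mid t' \in [t]_R\}$; writing $x' = [t]_R$, the hypothesis $u(x') = h\{v(s') \mid s' \in x'\}$ gives exactly $u([t]_R) = h\{v(t') \mid t' \in [t]_R\}$, so the two sides coincide. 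This exhausts the cases, establishing the claim.

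I do not expect any serious obstacle here; this is essentially a definition-chase. The one point that deserves care is making sure the hypothesis on $u$ is invoked for the \emph{right} region, namely the region $[t]_R$ of the successor $t$ (which in general is different from $x$), rather than for $x$ itself — the condition $u(x') = h\{v(s') \mid s' \in x'\}$ in the lemma statement is stated for all regions $x'$ uniformly, so this is fine, but the write-up should be explicit about which region each application refers to. A second minor check is that $\hat{g}$ only ever needs the value of $u$ on regions other than $x$ (and the value of $v_x$ only on $x$), which is what makes the space-saving claim meaningful; this is immediate from the case split above. No monotonicity or fixpoint reasoning is needed for this lemma — those were handled in Lemmas~\ref{lemm-1-app} and~\ref{lemm-2-app} — so the proof is purely a matter of comparing the two definitions term by term.
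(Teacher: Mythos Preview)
Your proposal is correct and follows exactly the approach the paper intends: the paper's own proof is the single sentence ``The result is easy to [verify] using Definition~\ref{defi-mpre} and Definition~\ref{defi-mprex},'' and your write-up simply carries out that definition comparison in full, including the only nontrivial observation (that the hypothesis on $u$ must be applied at the region $[t]_R$ of the successor, not at $x$). There is nothing to add or correct.
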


%Like the classical algorithms, we provide two implementations.
Magnified iteration, which involves the $\MPrex$ implementation of $\MPre$ 
using magnifying-lens abstraction technique, can be done in two ways 
like the classical algorithms.
We present them below.

\smallskip\noindent{\em Solution of fixpoint by optimization.} The fixpoints
of the functions that provide upper and lower bound on the value 
using $\MPre$ and $h$ as $\max$ and $\min$ can be obtained by solution of 
optimization problems.
We present the fixpoint solution for the case when $h$ is $\max$ and the case when
$h$ is $\min$ is similar.
Given a partition $R$, we have two valuation variables $u^+: R \to \rgz$
and $v: S \to \reals$ and we denote by $v_x$ the valuation variable $v$ 
restricted to a region $x \in R$.
We have a set of global constraints that specifies that in 
every region $x$ the value $u^+(x)$ is the maximum value of $v_x(s)$ for all 
$s \in x$; i.e., we have the following constraints
\[
u^k(x) = h_{s \in x} v_x (s)  \textrm{ for all } x\in R.
\]
Along with the above constraints we have local constraints for every region 
$x \in R$ and it specifies that $v_x(s)$ should satisfy the fixpoint 
constraints for $\MPrex$.
In other words, for every region $x \in R$ we have the following set of 
local constraints:
\[
v_x(s) =  (1-\beta)\cdot r(s) + \beta\cdot \MPrex (v_x,R,u^+)(s) \textrm{ for all } s \in x. 
\]
Thus instead of solving one huge optimization problem, using the 
$\MPrex$ we decompose the optimization problem into many smaller sub-problems
with independent sub-parts.
Thus the solution is more space efficient and can be achieved faster in practice.
Also notice that the solution by optimization to obtain the fixpoint 
correspond to the solution of magnified iteration (MagIter) with 
$\vefloat=0$.

\begin{theorem}[Correctness of Approximation]
Given a turn-based stochastic game $G =((S, E), (\SA,\SB,\SR),\trans)$, a discount factor $\beta$, 
a reward function $r$, and error bounds $\veabs {>} 0$, and $\vefloat=0$, the following assertions hold:
let $(R,u^+,u^-)=\mathit{MLA}(G,\beta,r,\veabs,0)$, then 
\begin{compactenum}
\item for all $s \in S$ we have 
$u^-([s]_R) \leq \va^G(\Disc(\beta,r))(s) \leq u^+([s]_R)$ ; and 
\item for all $x \in R$ we have $u^+(x)-u^-(x) \leq \veabs$.
\end{compactenum} 
\end{theorem}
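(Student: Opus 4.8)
The plan is to prove the two assertions essentially by chaining together the lemmas already established, once we verify that the MLA algorithm with $\vefloat = 0$ actually computes the least fixpoints $v^*_+$ and $v^*_-$ of the functions $l_+$ and $l_-$ from Lemma~\ref{lemm-2-app}, at the level of the partition $R$ returned by the algorithm. First I would fix the final partition $R$ produced by the loop and observe that, by Lemma~\ref{lemm-3-app}, running GlobalValIter with $h = \max$ (resp. $h = \min$) and $\vefloat = 0$ — equivalently, solving the local fixpoint constraints for $\MPrex$ together with the global constraints $u^+(x) = \max_{s \in x} v_x(s)$ — produces exactly a fixpoint of $l_+$ (resp. $l_-$), lifted to $R$ via $u^+(x) = \max_{s\in x} v^*_+(s)$. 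Here I need the fact that the decomposed optimization (local constraints per region plus global max/min constraints) has the same solution as the monolithic fixpoint equation for $l_+$; this follows because $\MPrex(v_x, R, u^+)(s) = \MPre(\max, v, R)(s)$ for $s \in x$ whenever $u^+$ is the regionwise maximum of $v$, which is precisely Lemma~\ref{lemm-3-app}.

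Next I would argue that the fixpoint reached is the \emph{least} one: since $l_+$ and $l_-$ are monotonic (Lemma~\ref{lemm-1-app}(1), via Lemma~\ref{lemm-2-app}) and map the cube $[-Q,Q]^S$ into itself (Lemma~\ref{lemm-1-app}(2)), the value iteration started from $u^- := \mathbf{0}$ (line~2, then line~4 copies into $u^+$) converges monotonically from below to the least fixpoint, exactly as in the classical Theorem~\ref{thrm-classical}. Thus at termination $u^+ = v^*_+$ and $u^- = v^*_-$ lifted to $R$. Applying the sandwich bound of Lemma~\ref{lemm-2-app}, $v^*_- \le \va^G(\Disc(\beta,r)) \le v^*_+$, and then reading this off regionwise — using that $u^-([s]_R) = \min_{t \in [s]_R} v^*_-(t) \le v^*_-(s)$ and $u^+([s]_R) = \max_{t \in [s]_R} v^*_+(t) \ge v^*_+(s)$ — gives assertion~(1). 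Wait: I should double-check the direction of these regionwise inequalities; since $u^-$ is defined as a \emph{min} over the region it lies \emph{below} $v^*_-(s)$, which is the right direction for a lower bound, and symmetrically for $u^+$, so assertion~(1) holds with the stated inequalities $u^-([s]_R) \le \va^G(\Disc(\beta,r))(s) \le u^+([s]_R)$.

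Assertion~(2) is then immediate from the termination condition of the loop: the algorithm returns (line~9) only when the test $\|u^+ - u^-\| \ge \veabs$ fails, i.e. when $\max_{x \in R}(u^+(x) - u^-(x)) < \veabs$, which is stronger than the claimed $\le \veabs$. The one genuine gap I would need to close is \textbf{termination of the loop}: I must show that SplitRegions cannot refine forever, i.e. that after finitely many splits the gap $\|u^+ - u^-\|$ drops below $\veabs$. This is the main obstacle. The argument should be that the concrete value iteration (with $\vefloat = 0$) on the finest partition — the discrete one, where every region is a singleton — makes $\MPrex$ coincide with the exact $\Pre$, hence $u^+ = u^- = \va^G(\Disc(\beta,r))$ and the gap is $0 < \veabs$; since $S$ is finite, SplitRegions reaches (a refinement at least as fine as needed, and in the worst case) the discrete partition after finitely many iterations, and monotonicity of the bounds under refinement (finer partitions give tighter $\MPre$-sandwiches, which I would state as an auxiliary monotonicity lemma if not already available) ensures the gap is non-increasing along the refinement sequence. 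Combining finiteness of the refinement chain with the fact that the discrete partition yields gap $0$ gives termination, and with it the full theorem.
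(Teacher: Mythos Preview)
Your proposal is correct and in fact considerably more detailed than the paper's own proof, which is a one-line sketch in the appendix: ``Both the proofs are based on the fact that as $\veabs$ and $\vefloat$ converge to $0$, the output of the MLA algorithm converges to the value of the game.'' You have essentially reconstructed the argument the paper leaves implicit, chaining Lemmas~\ref{lemm-1-app}--\ref{lemm-3-app} in the intended way.

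Two minor remarks. First, your concern about reaching the \emph{least} fixpoint is unnecessary: $\MPre(h,\cdot,R)$ is non-expansive in the sup norm (this follows directly from Definition~\ref{defi-mpre}), so $l_+$ and $l_-$ are $\beta$-contractions and have \emph{unique} fixpoints; convergence from $\mathbf{0}$ is then automatic regardless of the sign of rewards. Second, your termination argument via the discrete partition is exactly right and is the substance behind the paper's ``convergence as $\veabs \to 0$'' remark; the monotonicity-under-refinement lemma you mention wanting is not strictly needed, since it suffices that the discrete partition gives gap $0$ and that SplitRegions strictly refines (so reaches singletons in finitely many steps on a finite $S$).
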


%%\smallskip
\noindent{\em Value iteration implementation of MagIter.} The  Magnified Iteration (MagIter) 
step can also be implemented as a value iteration approach. 
When MagIter is implemented as a value iteration, then we 
require that $\vefloat {>} 0$. 
The parameter $\vefloat$  specifies the degree of precision to which 
the local, magnified value iteration should converge. 
Algorithm~\ref{algo-mi-vi} describes the formal description of the procedure.

\begin{algorithm}[htb]
\caption{MagIter$(G,R,x,u,\beta,r,h,\vefloat)$} 
  \label{algo-mi-vi}
  
\begin{tabbing}
{\bf Input} : game $G$, partition $R$, a region $x \in R$, \\
\qquad \quad  \ valuation $u : R \to \rgz$, discount factor $\beta$, \\
\qquad \quad \  reward function $r: S \to \rgz$\\
\qquad \quad \  $h \in \{max, min\}$, error $\vefloat$\\
{\bf Output} :  a value $u(x) : \rgz$ \\
{\bf Data Structure} : $v, \hat{v}$: valuations over $x$ \\
1.  \= {\bf for} $s \in x$ {\bf do} $v(s) {=} u(r)$ {\bf end for}\\
2. \> {\bf repeat}\\ 
3. \quad $\hat{v} {:=}  v $\\
3. \> \quad {\bf for} $s \in x$ {\bf do} \\
4. \>  \quad \quad $v(s) {=} (1-\beta)\cdot r(s) + \beta\cdot \MPrex (\hat{v},R,u)(s)$ \\
5. \> \quad {\bf end for}\\
6. \> {\bf until} $||v - \hat{v} || \leq \vefloat$ \\  
7.  \> {\bf return} $h \set{v(s) \mid s \in x}$
\end{tabbing}
\vspace*{-2ex}
\end{algorithm}

\begin{theorem}[Termination and Correctness]
Given a turn-based stochastic game $G =((S, E), (\SA,\SB,\SR),\trans)$, a discount factor $\beta$, 
a reward function $r : S \to \rgz$, for all error bounds $\veabs {>} 0$,  the following assertions hold.
\begin{compactenum}
\item For all $\vefloat>0$, the call $\mathit{MLA}(G,\beta,r,\veabs,\vefloat)$ terminates.
\item There exists an error bound $\vefloat$ such that if 
$(R,u^+,u^-)=\mathit{MLA}(G,\beta,r,\veabs,\vefloat)$, then 
\begin{compactenum}
\item for all 
$s \in S$ we have $u^-([s]_R) \leq \va^G(\Disc(\beta,r))(s) \leq u^+([s]_R)$; 
and 
\item for all $x \in R$ we have $u^+(x)-u^-(x) \leq \veabs$. 
\end{compactenum}
\end{compactenum}
\end{theorem}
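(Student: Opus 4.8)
The plan is to prove the two assertions in turn, relying on the properties of $\MPre$ and $\MPrex$ (Lemmas~\ref{lemm-1-app}--\ref{lemm-3-app}), on Lemma~\ref{lemm-2-app}, and on the preceding Correctness-of-Approximation theorem, which already handles the exact (``$\vefloat=0$'') case.

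\textbf{Termination.} It suffices to show that the three nested loops all halt. MagIter (Algorithm~\ref{algo-mi-vi}) iterates, for a fixed region valuation $u$, the map $v_x \mapsto (1-\beta)\cdot r + \beta\cdot\MPrex(v_x,R,u)$ on valuations over $x$; by Definition~\ref{defi-mprex}, $\MPrex(\cdot,R,u)$ is non-expansive in $\|\cdot\|$ (each coordinate is a $\max$, a $\min$, or a convex combination of coordinates of its argument or of the constants $u$), so the map is a $\beta$-contraction and $\|v-\hat v\|$ decays geometrically; hence the test $\|v-\hat v\|\le\vefloat$ is met after finitely many rounds when $\vefloat>0$. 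GlobalValIter (Algorithm~\ref{algo-global}) iterates over region valuations the operator $T_h(u)(x)=h\{v^{*}_{x,u}(s)\mid s\in x\}$, where $v^{*}_{x,u}$ is the fixpoint produced by MagIter. I would show $T_h$ is itself a $\beta$-contraction: viewing $v^{*}_{x,u}$ as the value of the sub-game on $x$ whose boundary has been frozen to $u$, the boundary values enter a play only multiplied by $\beta^{K}$, where $K\ge 1$ is the first time the play leaves $x$; perturbing $u$ by $\delta$ therefore perturbs $v^{*}_{x,u}$, and so $T_h(u)(x)$, by at most $\beta\delta$. Consequently GlobalValIter is, up to the inexactness of MagIter (made negligible relative to the stopping threshold by running MagIter to proportionally finer precision), an inexact Banach iteration of a $\beta$-contraction, so $\|u-\hat u\|$ eventually drops below $\vefloat$. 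Finally, the outer loop of Algorithm~\ref{algo-mla} invokes SplitRegions only when $\|u^+-u^-\|\ge\veabs$, and each invocation strictly refines the partition; since $S$ is finite only finitely many strict refinements are possible, so MLA halts.

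\textbf{Correctness.} Run the exact algorithm first: by the Correctness-of-Approximation theorem, $\mathit{MLA}(G,\beta,r,\veabs,0)$ returns a partition $R_0$ and valuations $u^{\pm}_0$ with $u^-_0([s]_{R_0})\le\va^G(\Disc(\beta,r))(s)\le u^+_0([s]_{R_0})$ and, because the run terminated, i.e.\ the test $\|u^+-u^-\|\ge\veabs$ failed, with a \emph{strict} slack $u^+_0(x)-u^-_0(x)\le\veabs-\delta_0$ for some $\delta_0>0$ and all $x$. I would then argue that for $\vefloat$ small the run $\mathit{MLA}(G,\beta,r,\veabs,\vefloat)$ traces the same computation: each SplitRegions decision compares finitely many reals against $\veabs$, hence is stable under $O(\vefloat)$ perturbations (assuming, harmlessly, that equality with $\veabs$ never occurs, which can be forced by an infinitesimal change of $\veabs$), so the approximate run terminates with the same $R_0$ and with $u^{\pm}$ within $O(\vefloat)$ of $u^{\pm}_0$ by the Lipschitz stability of all operators involved; then $u^+(x)-u^-(x)\le\veabs-\delta_0+O(\vefloat)\le\veabs$, which is~(b). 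For the two-sided bound~(a), the lower side is automatic: with $h=\min$, GlobalValIter only ever iterates \emph{up} from a region valuation below the abstract $\min$-fixpoint $u^{*}_{\min,R}$, and $u^{*}_{\min,R}(x)=\min_{t\in x}v^*_-(t)\le v^*_-(s)\le\va(s)$ for $s\in x$ by Lemma~\ref{lemm-2-app}, so $u^-([s]_R)\le\va(s)$ throughout. Dually one shows the abstract $\max$-operator $T_{\max}$ preserves $\{u\mid u(x)\ge\max_{s\in x}\va(s)\}$ --- if the boundary of $x$ dominates $\va$, the sub-game value at $s\in x$ is at least that of the sub-game whose boundary is $\va$ itself, which is exactly $\va(s)$ since $\va$ is a fixpoint of that sub-game operator --- so, keeping the approximate $\max$-iteration on this side and the $\min$-iteration on the other (which costs only a safety margin that vanishes with $\vefloat$ and is again absorbed into $\delta_0$), we get $\va(s)\le u^+([s]_R)$.

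\textbf{Main obstacle.} The contraction-based termination of the inner loops is routine. The delicate part is assertion~(a) with $\vefloat>0$: a finite-precision iteration must nonetheless deliver an \emph{exact} two-sided sandwich of $\va$. This hinges on combining (i) the strict slack $\delta_0>0$ of the exact run, which leaves room to absorb $O(\vefloat)$ errors and keeps the split decisions unchanged, with (ii) a proof that the approximate $\max$- and $\min$-iterations never drift to the wrong side of the respective abstract fixpoints, so that one-sidedness of the bounds is preserved. Establishing (ii) --- in particular that the perturbed iterates do not slip past $\va$ --- is where I expect the bulk of the technical effort to lie.
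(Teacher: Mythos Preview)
The paper's own proof of this theorem is a single sentence in the appendix: it asserts that as $\veabs$ and $\vefloat$ tend to zero the output of MLA converges to the game value, and leaves everything else implicit. Your proposal is therefore not a variant of the paper's argument but a genuine attempt to supply the details the paper omits; the comparison is one of depth rather than of strategy.

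Your termination argument via contractions is the right idea and goes well beyond the paper. The $\beta$-contraction of the local map in MagIter is immediate, and your claim that $T_h$ is a $\beta$-contraction on region valuations is correct (perturbing the boundary $u$ by $\delta$ perturbs the local fixpoint by at most $\beta\delta$, as you say). One point you gloss over: GlobalValIter's stopping test compares successive iterates of an \emph{inexact} $T_h$, and the MagIter error is not ``made negligible by running MagIter to proportionally finer precision'' --- the algorithm uses the same $\vefloat$ at both levels. To close this you would need to observe that MagIter is seeded with the previous $u(x)$, so its error shrinks as the outer iterates approach the fixpoint of $T_h$; then the consecutive difference of the outer iterates does eventually fall below $\vefloat$. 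This step is missing from your sketch.

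For correctness, your perturbation strategy --- run the $\vefloat=0$ algorithm, extract a strict slack $\delta_0$ from the strict test in line~7, then argue the approximate run follows the same split decisions and lands $O(\vefloat)$-close --- cleanly delivers part~(b) and is more informative than the paper's convergence assertion. Part~(a) is, as you correctly diagnose, the crux. Your invariance claim that the \emph{exact} $T_{\max}$ preserves $\{u : u(x)\ge\max_{s\in x}\va(s)\}$ is right, but it does not transfer to the inexact iteration, and the initialization $u^+:=u^-$ in line~4 of Algorithm~\ref{algo-mla} starts $u^+$ \emph{below} $\va$, outside that invariant set; dually, your ``automatic'' lower bound $u^-\le\va$ relies on the exact min-iterates staying below $u^{*}_{\min,R}$, which early stopping can violate. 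Closeness to the $\vefloat=0$ output does not by itself give one-sidedness. I do not see how to close this without either adding a one-sided safety margin of order $\vefloat/(1-\beta)$ to the MagIter output (which the algorithm as written does not do) or reading the existential quantifier on $\vefloat$ very literally. The paper's one-line sketch does not address this point either, so you have correctly located where the real work lies.
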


%%\paragraph
\noindent{\em Adaptive refinement step (SplitRegions).}
The step {\em SplitRegions} is obtained by adaptive refinement of regions 
with 
large imprecisions. 
We denote the {\em imprecision\/} of a region $x$ by $\Delta(x) = u^+(x) - u^-(x)$. 
MLA adaptively refines a partition $R$ by splitting all regions $x$
having $\Delta(x) > \veabs$.
The refinement scheme is simple and easy to implement. 
Thus a call to $SplitRegions(R,u^+,u^-,\veabs)$ returns a triple 
$\tilde{R}, \tilde{u}^-, \tilde{u}^+$, consisting of the new
partition with its upper and lower bounds for the valuation. 
Like ~\cite{deAlR07}, we also tried other refinement heuristics, but none of them
gave strictly better results. 
%We split the regions with high $\Delta$-value, all whose successors
%have low $\Delta$-value: the idea is that such regions are the ones where
%precision degrades. 
%While this reduces somewhat the number of region splits, the total
%number of refinements is increased, and the resulting algorithm is not
%clearly superior, at least in the examples we considered. 
%We also experimented with splitting all regions $r \in R$ with
%$\Delta(r) > \delta$, for a threshold $\delta$ that is initially set
%to~$\frac{1}{2}$, and that is then gradually decreased to $\veabs$. 
%This approach, inspired by simulated annealing, also failed to provide
%consistent improvements. 
%This completes the description of the magnifying lens abstraction approach
%to obtain upper and lower bounds for values in stochastic games with
%discounted objectives.

\paragraph{Space Savings}
For value iteration algorithm, the space requirement is equal to  the size of state-space $|S|$,
the domain of $v$. 
For MLA, the space requirement is equal to be the maximum value of 
$2\cdot |R| + \max_{x \in R} |x|$.
The expression gives the maximum space
required to store the valuations $u^+$, $u^-$, as well as the values
$v$ for the largest magnified region.
Since $\max_{x\in R} |x| \geq (|S|/|R|)$, the space complexity of
the algorithm is (lower) bounded by a square-root function $ \sqrt{8\cdot|S|}$.
However, this bound is provided for the concrete implementation. 
%\subsection{Value Iteration}
%The simplex method provides an accurate value for the concrete discounted sum problem. 
%However, this numerical method are often infeasible to solve games with large state-spaces.
%Iterative techniques (like value iteration) are useful to solve those big problems.
%Value iteration (Algorithm~\ref{valueiteration}) returns a solution within a pre-specified error bound $\vefloat$.  
%\begin{algorithm}[htb]
%\caption{VI$(G, \beta,r,\vefloat)$ \quad Value iteration}
%\label{valueiteration}
%\begin{tabbing}
%1.  \= $v := 0$ \\
%2. \> {\bf repeat} \\
%3. \> \ $\hat{v} := v$ \\
%4. \> \ \ \ {\bf for all} $s \in S_1$ {\bf do} $v(s) :=  r+ \beta \cdot \max_{t \in E(s)} \hat{v}(t)$ {\bf end for}\\
%5. \> \ \ \ {\bf for all} $s \in S_2$ {\bf do} $v(s) :=  r + \beta \cdot \min_{t \in E(s)} \hat{v}(t)$ {\bf end for}\\
%6. \> \ \ \ {\bf for all} $s \in S_p$ {\bf do} $v(s) :=  \sum_{t \in S} \trans(s,t) \cdot \hat{v}(t)$ {\bf end for}\\
%5. \> {\bf until} $||v - \hat{v}|| \le \vefloat$ \\
%6. \> {\bf return} $v$ 
%\end{tabbing}
%\vspace*{-2ex}
%\end{algorithm}

\section{MLA for %%Magnifying Lens Abstraction to Approximate Values for 
Long-run Average Objectives}\label{sec-longrun}

In this section we present magnifying lens abstraction solution 
for a class of stochastic games with long-run average objectives. 
We first describe the efficient classical solution and 
then present our magnifying lens abstraction solution.

\smallskip\noindent{\em Value iteration for long-run average objectives.} 
A \emph{value iteration} algorithm can be used to compute the long-run 
average value as follows: for a state $s$ we compute by value iteration 
the maximum expected sum of the rewards for $k$-step starting from 
$s$, and we denote this sum as $S(k,s)$. 
Then the value of the state $s$ is $\lim_{k \to \infty} \frac{S(k,s)}{k}$.
However, this technique is not very practical as  $S(k,s) \to \infty$ 
and $S(k,s)$ diverges fast towards infinity.
Hence computing $S(k,s)$ and dividing by $k$ is computationally 
expensive and not very practical.
This problem can be alleviated by {\em relative value iteration} algorithm that 
subtracts a number $c \in \reals$ in each iteration.
This technique trims the values for all states simultaneously, and this technique 
is an efficient way to compute values in games that have same values in 
all states.
\begin{lemma}
\label{lemm-rel-val} 
Consider a turn-based stochastic game graph $G=((S, E), (\SA,\SB,\SR),\trans)$ 
with a reward function $r:S\to \rgz$.
For a real number $c$, consider a sequence of valuations $(v_i)_{i \ge 0}$ as follows: 
let $v_0 = \mathbf{c}$ and for all $i \ge 0$ and $s \in S$ we have,
%\[
$v_{i+1}(s) =  r(s) -c + \Pre(v_i)(s).$
%\]
If there exists a real value $v^*$ such that for all $s \in S$ we have 
$\va^G(\LimAvg(r))(s)=v^*$, then the following conditions hold:
\begin{compactenum}
\item The sequence $(v_{i})_{i \ge 0}$ diverges to $+\infty$ iff $c < v^*$.
\item The sequence $(v_{i})_{i \ge 0}$ diverges to $-\infty$ iff $c > v^*$.
\end{compactenum}
\end{lemma}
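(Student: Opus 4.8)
The idea is to reduce the relative value iteration $(v_i)_{i\ge 0}$ to the ordinary (undiscounted) total-reward value iteration, whose asymptotics under the constant-value hypothesis are classical, and then read off the three cases.

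\emph{Step 1 (reduction to total-reward iteration).} First I would record two elementary properties of $\Pre$, both immediate from Definition~\ref{def:pre} since $\max$, $\min$, and convex combinations all enjoy them: translation invariance, $\Pre(v+\mathbf{a})(s)=\Pre(v)(s)+a$ for every constant $a\in\reals$; and non-expansiveness in the sup-norm, $\|\Pre(v)-\Pre(v')\|\le\|v-v'\|$. Let $S(k,\cdot)$ denote the total-reward iterates, $S(0,\cdot)=\mathbf{0}$ and $S(k{+}1,s)=r(s)+\Pre(S(k,\cdot))(s)$. Then a one-line induction on $i$, using translation invariance, gives $v_i(s)=S(i,s)-(i-1)\,c$ for all $i\ge 0$ and all $s\in S$: the base case is $v_0=\mathbf{c}=S(0,\cdot)+\mathbf{c}$, and in the step the extra $-c$ in the recursion for $v_{i+1}$ together with the $-(i-1)c$ shift inside $\Pre$ collapses to $-ic$.

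\emph{Step 2 (bias function and the $O(1)$ bound).} Here I would invoke the classical theory of turn-based stochastic games with long-run average payoff (\cite{FilarVrieze97,Bertsekas95,LigLipp69}): since by hypothesis $\va^G(\LimAvg(r))$ is the constant $v^*$, the average-reward optimality equation has a (bounded) solution $h:S\to\reals$, i.e.\ $v^*+h(s)=r(s)+\Pre(h)(s)$ for all $s$. Writing $e_k(s):=S(k,s)-k\,v^*-h(s)$ and substituting $S(k,\cdot)=\mathbf{k v^*}+h+e_k$ into the recursion for $S(k{+}1,\cdot)$, the $\mathbf{k v^*}$ term factors out by translation invariance and the optimality equation cancels the $r-v^*-h$ part, leaving $e_{k+1}(s)=\Pre(e_k+h)(s)-\Pre(h)(s)$. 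By non-expansiveness $\|e_{k+1}\|\le\|e_k\|$, so $\|e_k\|\le\|e_0\|=\|h\|$ for all $k$. Hence $S(k,s)=k\,v^*+O(1)$ uniformly over the (finite) set $S$.

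\emph{Step 3 (case analysis).} Combining the two steps, $v_i(s)=S(i,s)-(i-1)c=i\,(v^*-c)+\bigl(h(s)+e_i(s)+c\bigr)$, where the bracketed term is bounded uniformly in $i$ and $s$. Therefore: if $c<v^*$, the term $i(v^*-c)\to+\infty$ dominates and $v_i(s)\to+\infty$ for every $s$; if $c>v^*$ it $\to-\infty$ and $v_i(s)\to-\infty$; and if $c=v^*$ then $v_i=h+e_i+\mathbf{c}$ stays bounded. Since $c<v^*$, $c=v^*$, $c>v^*$ are exhaustive and mutually exclusive and yield the three mutually exclusive behaviors ($+\infty$ / bounded / $-\infty$), both ``iff'' statements of the lemma follow at once. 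I expect the only real obstacle to be justifying the existence of the bias $h$ in Step~2: this is precisely where the hypothesis ``all states have the same value'' is essential — for MDPs it is the classical average-reward optimality equation under constant gain, and for games one appeals to the ergodic/constant-value case of the cited references — and it is the reason the long-run-average result is confined to that subclass (and to all MDPs). A minor point to dispatch along the way is that ``divergence to $\pm\infty$'' is meant uniformly over $S$, which the bounded-$e_k$ estimate handles automatically.
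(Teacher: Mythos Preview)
Your argument is correct. The paper, however, does not supply a proof of Lemma~\ref{lemm-rel-val} at all: it is stated as a background fact about relative value iteration (in the spirit of \cite{Bertsekas95,FilarVrieze97}) and then used as a black box in the proof of Lemma~\ref{lemm-mag-rel-val}. So there is nothing to compare against; you have simply filled in what the paper leaves implicit.

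Your three steps are sound. The reduction $v_i(s)=S(i,s)-(i-1)c$ via translation invariance of $\Pre$ is exactly right, and the non-expansiveness argument bounding $\|e_k\|$ by $\|h\|$ is clean. You correctly isolate the only nontrivial input: the existence of a bounded bias $h$ solving $v^*+h(s)=r(s)+\Pre(h)(s)$ under the constant-gain hypothesis. For finite turn-based stochastic games this is indeed the content of the Liggett--Lippman / Filar--Vrieze theory you cite, and is precisely why the paper restricts its long-run-average result to the constant-value subclass. Once $h$ exists, your trichotomy in Step~3 is immediate and gives both ``iff'' directions simultaneously.

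One small presentational remark: since the paper never proves the lemma, you might want to make the dependence on the cited references for the bias equation explicit as a standalone claim rather than burying it in the discussion, so that a reader sees clearly that the entire argument is self-contained modulo that one classical fact.
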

The relative value iteration algorithm chooses a real value $c$, and 
then adjusts the value of $c$ adaptively depending on whether the sequence 
$(v_i)_{i \ge 0}$, given the chosen value $c$, diverges to $+\infty$ or $-\infty$, 
otherwise the chosen real number $c$ is value of the game.

%%\subsection{MLA %%agnifying Lens Abstraction 
%%for Stochastic Games}\label{subsec-longrun-stochgames}
\smallskip\noindent{\bf MLA %%agnifying Lens Abstraction 
for Stochastic Games.} %%\label{subsec-longrun-stochgames}
We develop magnifying lens abstraction solution for stochastic games 
under the assumption that there is a \emph{uniform value} $v^*$ such that 
every state has the same value $v^*$. 
Later we will consider the question of presenting criteria for its existence.
For MDPs we will present our solution of all MDPs (without the assumption 
of existence of uniform value).
The magnifying lens abstraction solution for stochastic games with long-run 
average objective is based on the following lemma.
\begin{lemma}[Magnified Relative Value Iteration.]\label{lemm-mag-rel-val}
Given a game graph $G$, for all partitions $R$,  consider two sequence of valuations  
$(v^+_i)_{i \ge 0}$ and $(v^-_i)_{i \ge 0}$ as follows : 
let $v^+_0 = v^-_0 = c$ and for all $i \ge 0$ and $s \in S$ we have:
\[
  v^+_{i+1}(s) = r(s) - c + \MPre(\max,v^+_{i},R)(s)
\]
\[
  v^-_{i+1} (s) = r(s) - c + \MPre(\min,v^-_{i},R)(s)
\] 
If there exists a real value $v^*$ such that for all $s \in S$ we have $\va^G(\LimAvg(r))(s)=v^*$, then 
the following conditions hold:
\begin{compactenum}
\item If the sequence $(v^{-}_{i})_{i \ge 0}$ diverges to $+\infty$, then $c < v^*$.
\item If the sequence $(v^{+}_{i})_{i \ge 0}$ diverges to $-\infty$, then $c > v^*$.
\end{compactenum}
\end{lemma}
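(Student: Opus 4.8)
The plan is to reduce Lemma~\ref{lemm-mag-rel-val} to the classical relative value iteration (Lemma~\ref{lemm-rel-val}) by comparing the magnified sequences with suitable \emph{exact} relative-value sequences on auxiliary games. The key observation is that by Lemma~\ref{lemm-1-app}(3), for every valuation $v$ we have $\Pre(v)(s) \le \MPre(\max,v,R)(s)$ and $\Pre(v)(s) \ge \MPre(\min,v,R)(s)$ for all $s \in S$; moreover $\MPre(\max,\cdot,R)$ and $\MPre(\min,\cdot,R)$ are monotonic by Lemma~\ref{lemm-1-app}(1). I would first prove the monotonicity-in-$i$ fact that is really needed: if $(v_i)_{i\ge0}$ is the exact relative value sequence of Lemma~\ref{lemm-rel-val} started at $\mathbf{c}$ and $(v^+_i)_{i\ge0}$ is the magnified-$\max$ sequence started at $\mathbf{c}$, then $v_i \le v^+_i$ for all $i$, by induction on $i$: the base case is equality, and for the step, $v^+_{i+1}(s) = r(s)-c+\MPre(\max,v^+_i,R)(s) \ge r(s)-c+\Pre(v^+_i)(s) \ge r(s)-c+\Pre(v_i)(s) = v_{i+1}(s)$, using $\MPre(\max,\cdot,R)\ge\Pre$ and monotonicity of $\Pre$ together with the induction hypothesis. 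Symmetrically, $v^-_i \le v_i$ for all $i$.

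With these two sandwich inequalities in hand, the two claims follow quickly. For claim~1: if $(v^-_i)_{i\ge0}$ diverges to $+\infty$, then since $v^-_i \le v_i$ we get that $(v_i)_{i\ge0}$ also diverges to $+\infty$; but by Lemma~\ref{lemm-rel-val}(1), under the uniform-value hypothesis $\va^G(\LimAvg(r))(s)=v^*$ for all $s$, the exact sequence $(v_i)_{i\ge0}$ diverges to $+\infty$ only if $c < v^*$ (the contrapositive of Lemma~\ref{lemm-rel-val}(2) together with the fact that if $c = v^*$ the sequence stays bounded, which is the remaining trivial case of the relative-value dichotomy). Hence $c < v^*$. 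For claim~2: if $(v^+_i)_{i\ge0}$ diverges to $-\infty$, then since $v_i \le v^+_i$ the exact sequence $(v_i)_{i\ge0}$ also diverges to $-\infty$, and by Lemma~\ref{lemm-rel-val}(2) this forces $c > v^*$.

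The main obstacle I anticipate is not the induction itself but the fact that $\MPre(h,\cdot,R)$ is \emph{not} in general the $\Pre$ operator of any fixed turn-based stochastic game, so I cannot simply quote Lemma~\ref{lemm-rel-val} ``on the magnified game.'' That is precisely why the proof is organized as a \emph{one-sided comparison} with the exact sequence rather than as a direct invocation of the classical result: we only ever need the inequalities $\MPre(\max,\cdot,R)\ge\Pre\ge\MPre(\min,\cdot,R)$ and monotonicity, both supplied by Lemma~\ref{lemm-1-app}. A secondary point to handle carefully is the meaning of ``diverges to $+\infty$'': since the uniform-value assumption is in force, the exact relative value sequence obeys a clean trichotomy (bounded, $\to+\infty$, $\to-\infty$) by Lemma~\ref{lemm-rel-val}, so ``$v_i \le v^+_i$ and $v^+_i \to -\infty$'' does indeed force $v_i \to -\infty$ and not merely $\liminf v_i = -\infty$; I would state this trichotomy explicitly before using it. Note that the lemma as stated is one-directional (``if $\ldots$ diverges, then $c<v^*$''), so no converse is needed, which keeps the argument short.
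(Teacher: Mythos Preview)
Your approach is correct and is exactly the one the paper takes: the paper's proof simply records the inequality $\MPre(\min,v,R) \le \Pre(v) \le \MPre(\max,v,R)$ and says the result follows from Lemma~\ref{lemm-rel-val}. You have supplied the routine details the paper omits (the induction giving $v^-_i \le v_i \le v^+_i$ and the pass to divergence), so there is nothing to add.
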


Lemma~\ref{lemm-rel-val} relates the divergence of the sequence $(v_i)_{i\ge 0}$ for 
a chosen $c$ and the value of the game in both directions (iff conditions giving 
necessary and sufficient conditions), whereas the Lemma~\ref{lemm-mag-rel-val} 
(with magnified pre operator) relates the value of $c$ and the divergence of 
the sequence in one direction, i.e., if the sequence diverges in a given 
direction (i.e,$(v^+_i)_{i\geq 0}$ to $-\infty$ or $(v^{-}_i)_{i \ge 0}$ to $+\infty$),
then we conclude the relation of the value of the game and the chosen value $c$.
We now present the algorithm that is based on Lemma~\ref{lemm-mag-rel-val}.

\smallskip\noindent{\em Algorithm Sketch.} 
Algorithm~\ref{algo-long-game} provides an algorithm to approximate the long-run average value of 
a stochastic game $G$ such that every state has the same value.
Algorithm uses Lemma~\ref{lemm-mag-rel-val} to obtain upper and lower bound on the value of the 
game by a dichotomic (binary) search.
The search space in bounded by the interval $[c^-,c^+]$, where $c^+$ and $c^-$ denote an upper and 
a lower bound on the value of the game, respectively.
The initial value of $c^+$ (resp. $c^-$) is obtained from the maximum (resp. minimum) reward value
of the game.
%\[
% c^+  = \max_{s \in S} r(s) ; \ \ c^- = \min_{s \in S} r(s) 
%\]
Each iteration of this binary search starts by setting $c$ to the mid-point of the interval. 
If with the chosen value of $c$, the sequence $(v_{i}^+)_{i \geq 0}$ diverges to 
$-\infty$, then $c$ is an upper bound on the value of the game, and $c^+$ is set 
(decreased) to $c$.
Similarly, if with the chosen value of $c$, the sequence $(v_{i}^-)_{i \geq 0}$ diverges
to $-\infty$, then $c$ is an lower bound on the value of the game and 
$c^-$ is set (increased) to $c$.
\begin{algorithm}[htb]
\caption{MLALongRun$(G, r, \veabs, k)$ Magnifying-Lens Abstraction}
\label{algo-long-game}
\begin{tabbing}
{\bf Input} : game $G$, \ reward function $r : S \to \rgz$ \\
\qquad \quad errors $\veabs > 0$,\\ % $\vefloat \ge 0$\\
\qquad \quad maximum number of iterations $k : integer$\\
\qquad \quad a real value $ratio \in [0,1]$  \\
{\bf Output} :  final partition $R$\\
1.  \ \= $R {:=}$ some initial partition\\
2.  \> $c^+ {:=} \max_{s \in S} r(s)$,  $c^- {:=} \min_{s \in S} r(s)$\\
3.  \> {\bf while} $(c^+ - c^-) \le \veabs$ {\bf do}\\
4.  \> \quad $c { := } {(c^{+} + c^{-})/2}$\\
5.  \>  \quad $d^+, v^+ {:=} $\ CheckDivergence($G,r,R,c,\max, k$)\\
6.  \>  \quad $d^-, v^-  {:=} $\ CheckDivergence($G,r,R,c,\min, k$)\\
7.  \>  \quad{\bf if} $d^{-} = +$ {\bf then}  $c^- {:=} c$\\  
8.  \>  \quad {\bf else if} $d^{+} = -$ {\bf then}  $c^+ {:=} c$\\  
8.  \> \quad {\bf else}  $R$ := SplitRegions$(R,v^{+},v^{-}, \veabs, ratio)$ \\
9. \>  \quad {\bf end if} \\
10. \> {\bf end while}
\end{tabbing}
\vspace*{-2ex}
\end{algorithm}
The procedure {\em CheckDivergence} returns a verdict on divergence of the sequence
by computing $(k+1)$ elements of the sequence.
The verdict $+$,$-$  denote the divergence to $+\infty$  and $-\infty$ respectively. 
However, the verdict $?$ tells that either (a)~$k$-elements of the sequence is not enough to 
detect the divergence or (b)~the sequence may not diverge to $+\infty$ or $-\infty$.
If {\em CheckDivergence} returns $+$ for the $d^-$, then $c^-$ is set to $c$, and
if {\em CheckDivergence} returns $-\infty$ for the $d^+$, then $c^+$ is set to $c$.
Otherwise we do not have enough information to update $c^+$ or $c^-$, and algorithm 
refines the partition $R$ by invoking {\em SplitRegions} procedure.
The {\em imprecision} of a region $x \in R$ is denoted by $\Delta(x) = v^+(x) - v^-(x)$. 
The procedure {\em SplitRegions} splits a number (precisely $ratio\cdot |R|$) of high imprecision regions. 

\smallskip\noindent{\em Detecting divergence to $+\infty$ and $-\infty$.}
Algorithm~\ref{algo-checkdiv} illustrates the procedure {\em CheckDivergence} to detect the
divergence of a sequence starting from a given $c$ and a given number of iterations $k$.
If the value for every state increases beyond $c$ after $k$-iterations, then the 
sequence diverges to $+\infty$ (the procedure returns $+$), 
and if the value for every state decreases below $c$, 
then the sequence diverges to $-\infty$ (the procedure returns $+$). 
Otherwise, the divergence to $+\infty$ or $-\infty$ cannot be concluded and 
then the procedure returns the verdict $?$.
The algorithm also returns the ($k+1$)-th valuation of the sequence starting from $c$.
\begin{algorithm}[htb]
\caption{CheckDivergence$(G, r, R, c, h,k)$}
\label{algo-checkdiv}
\begin{tabbing}
{\bf Input} : game $G$, \ reward function $r : S \to \rgz$,\\  
\qquad \quad a partition $R$, a chosen value $c : \rgz$,\\ 
\qquad \quad $h \in \set{\max,\min}$,\\ 
\qquad \quad maximum number of iterations $k : integer$\\
{\bf Output} :  enum $d \in \set{+,-,?}$, valuation $v : R \to \rgz$\\
1.  \ \= $v_0 {:=} \mathbf{c}$, $d {:=} ?$ \\
2.  \> {\bf for} $i=0$ {\bf to} $k$  {\bf do}\\
3.  \>  \quad {\bf for each} $x \in R$\\ 
4.  \> \quad  \quad $v_{i+1}(x)$ := MagIter2$(G,R,x,v_i,r,h.k)$\\
5.  \>  \quad {\bf end for}\\
6.  \> {\bf end for}\\
7.  \>  {\bf if} ($\min_{x \in R} v_{k+1}(x) > c$) {\bf then} $d := +$\\
8.  \>  {\bf if} ($\max_{x \in R} v_{k+1}(x) < c$) {\bf then} $d := -$\\
9.  \> {\bf return} $d, v_{k+1}$
\end{tabbing}
\vspace*{-2ex}
\end{algorithm}

\smallskip\noindent{\em Magnified Iteration (Long Run Average version).} 
Algorithm~\ref{algo-mi-long} provides the details of the magnified
iteration of a region $x \in R$.
The algorithm completes value-iteration for $k$-iterations over the states of the region $x$, and 
summarizes the values to a single value.
Like discounted case, we use $\MPrex$ operator for the magnifying lens abstraction implementation.
\begin{algorithm}[htb]
\caption{MagIter2$(G,R,x,u,r,h,k)$} 
  \label{algo-mi-long}
  
\begin{tabbing}
{\bf Input} : game $G$, partition $R$, a region $x \in R$, \\
\qquad \quad  \ valuation $u : R \to \rgz$, \\
\qquad \quad  reward function $r: S \to \rgz$,
 $h \in \set{\max,\min}$,\\
\qquad \quad    maximum number of iterations $k : integer$\\
{\bf Output} :  a value $u(x) : \rgz$ \\
1.  \= {\bf for} $s \in x$ {\bf do} $v_0(s) {=} u(r)$ {\bf end for}\\
2. \>  {\bf for} $i=0$ {\bf to} $k$  {\bf do}\\ 
3. \> \quad {\bf for} $s \in x$ {\bf do} \\
4. \>  \quad \quad $v_{i+1}(s) { := }  r(s) - c + \MPrex(v_i,R,u)(s)$ \\
5. \> \quad {\bf end for}\\
6. \> {\bf end for}\\
7.  \> {\bf return} $h \set{v_{k+1}(s) \mid s \in x}$
\end{tabbing}
\vspace*{-2ex}
\end{algorithm}

\begin{theorem}[Termination and Correctness]
Let $G =((S, E), (\SA,\SB,\SR),\trans)$ be a turn-based stochastic game with
a reward function $r : S \to \rgz$ such that there exists $v^* \in \rgz$ and for all 
$s \in S$ we have $\va^G(\LimAvg(r))(s)=v^*$.
For all error bounds $\veabs {>} 0$, the following assertions hold.
\begin{compactenum}
\item The call $\mathit{MLALongRun}(G,r,\veabs,k)$ terminates.
\item There exists a positive integer  $k$ such that if 
$(R,u^+,u^-)=\mathit{MLALongRun}(G,\beta,r,\veabs,k)$, then 
\begin{compactenum}
\item for all 
$s \in S$ we have $u^-([s]_R) \leq \va^G(\LimAvg(r))(s) \leq u^+([s]_R)$; 
and 
\item for all $x \in R$ we have $u^+(x)-u^-(x) \leq \veabs$. 
\end{compactenum}
\end{compactenum}
\end{theorem}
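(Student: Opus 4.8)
\noindent\emph{Proof plan.}
The plan is to split the statement into a \emph{soundness invariant} on the search interval $[c^-,c^+]$ and a \emph{progress (termination) argument} for the outer loop. Since by hypothesis $\va^G(\LimAvg(r))(s)=v^*$ for every $s$, it suffices to take the returned bounds to be the constant valuations $u^-(x)=c^-$ and $u^+(x)=c^+$ for all $x\in R$: then condition~2(a) is exactly the invariant $c^-\le v^*\le c^+$, while condition~2(b) is exactly the negation $c^+-c^-\le\veabs$ of the loop guard. (The region valuations $v^\pm$ returned by \emph{CheckDivergence} enter only through the imprecision $\Delta(x)=v^+(x)-v^-(x)$ used to steer \emph{SplitRegions}, and play no role in the output bounds.)

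First I would record two elementary properties of the magnified relative-value step $T_h(v)(s)=r(s)-c+\MPre(h,v,R)(s)$: it is monotone (Lemma~\ref{lemm-1-app}) and it commutes with the addition of a constant, since $\max$, $\min$, convex combinations, and the selector $g$ all do. From these I would show the stopping tests of \emph{CheckDivergence} are sound certificates of divergence: if, after the finite prefix it computes, the $h{=}\min$ region summaries all exceed $c$ by some $\delta>0$, then---by monotonicity and constant-shift invariance of the composite update---re-running that prefix makes them rise by at least $\delta$ each time, so the magnified relative value iteration $(v^-_i)_{i\ge0}$ of Lemma~\ref{lemm-mag-rel-val} diverges to $+\infty$; hence $c<v^*$ by Lemma~\ref{lemm-mag-rel-val}(1). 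Symmetrically, all $h{=}\max$ summaries below $c$ force $c>v^*$ via Lemma~\ref{lemm-mag-rel-val}(2). (The region-summary iteration actually performed via \emph{MagIter2}/$\MPrex$ is identified with the $\MPre$-iteration of Lemma~\ref{lemm-mag-rel-val} by the long-run analogue of Lemma~\ref{lemm-3-app}.) The invariant $c^-\le v^*\le c^+$ then follows: it holds at line~2 because the long-run average of the rewards along any play lies in $[\min_s r(s),\max_s r(s)]$, so $v^*\in[c^-,c^+]$; and it is preserved, since $c^-{:=}c$ occurs only when $d^-{=}+$ (whence $c<v^*$), $c^+{:=}c$ only when $d^+{=}-$ (whence $c>v^*$; by soundness of the verdicts these two cases are mutually exclusive), and \emph{SplitRegions} leaves $c^\pm$ unchanged. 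This gives~2(a), and~2(b) holds once the loop exits.

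The main work is \textbf{termination}. Every iteration either \emph{narrows} (halving $c^+-c^-$) or \emph{refines} $R$; a narrowing halves the interval, so at most $M:=\lceil\log_2((\max_s r(s)-\min_s r(s))/\veabs)\rceil$ narrowings can occur before the guard fails. Moreover, once $k$ is large enough that every verdict is decisive, the midpoints used by the algorithm are forced: they form the usual binary search toward $v^*$, so only a fixed finite prefix $c_1,\dots,c_M$ of midpoints is ever used, each determined a priori from $v^*$ and $[c_0^-,c_0^+]$. Assume first that no $c_j$ equals $v^*$. Then for each $j$, Lemma~\ref{lemm-rel-val} yields a finite $N_j$ such that for all $i\ge N_j$ the classical relative value iteration $v_i$ from the constant valuation $c_j$ satisfies $\min_s v_i(s)>c_j$ if $c_j<v^*$ (resp.\ $\max_s v_i(s)<c_j$ if $c_j>v^*$). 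Choose $k\ge\max_j N_j$ and $k\ge|S|$. During a maximal run of consecutive refinements the midpoint $c$ stays fixed at some $c_j$, and each refinement strictly refines $R$ (unless $R$ is already discrete), so within $|S|$ refinements $R$ is the discrete partition, where $\MPre(h,\cdot,R)=\Pre$ (immediate from the definitions) and the prefix inspected by \emph{CheckDivergence} is that of the classical relative value iteration; since $k\ge N_j$ a decisive verdict is returned, triggering a narrowing. Hence each refinement run is finite, at most $M$ narrowings occur, and the loop terminates in $O(|S|\cdot M)$ iterations; correctness is then part~2 via the preceding paragraph.

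I expect the real obstacle to be exactly the excluded case: a midpoint $c_j$ that equals $v^*$ (a dyadic value, generically avoided). There the classical relative value iteration neither diverges to $+\infty$ nor to $-\infty$, so no verdict is ever issued; worse, once $R$ is discrete all region imprecisions $\Delta(x)$ vanish, so \emph{SplitRegions} has nothing left to split and the algorithm stalls. Discharging this cleanly requires one of: reading part~1 under the ``$\exists k$'' quantifier jointly with part~2 (i.e.\ $k$ chosen for the concrete instance), imposing a mild genericity hypothesis on $v^*$, or perturbing $\veabs$; the quantity one must ultimately control is the behaviour of relative value iteration at the critical shift $c=v^*$ (boundedness, and convergence to a bias vector in the uniform-value case). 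The remaining ingredients---monotonicity and constant-shift invariance of $T_h$, the identity $\MPre(h,\cdot,R)=\Pre$ at the discrete partition, and the cited Lemmas~\ref{lemm-rel-val} and~\ref{lemm-mag-rel-val}---are routine.
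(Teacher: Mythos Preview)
The paper supplies no proof of this theorem: the appendix treats only Theorem~\ref{thrm-basic-property}, Lemmas~\ref{lemm-1-app}--\ref{lemm-3-app} and~\ref{lemm-mag-rel-val}, and a one-sentence remark for the discounted Theorems~4--5. Your plan is therefore far more substantive than anything the authors offer, and its architecture---maintain the invariant $c^-\le v^*\le c^+$ from the soundness of the divergence verdicts, bound narrowings by $\lceil\log_2((\max r-\min r)/\veabs)\rceil$, and bound consecutive refinements by $|S|$ before $R$ becomes discrete---is the right one.

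One step is weaker than you claim. The ``long-run analogue of Lemma~\ref{lemm-3-app}'' does not identify the iteration \emph{CheckDivergence} actually performs with the $\MPre$-sequence of Lemma~\ref{lemm-mag-rel-val}: \emph{MagIter2} runs $k{+}1$ inner $\MPrex$-steps with a \emph{frozen} abstract valuation $u$, so after the first inner step the hypothesis $u(x)=h\{v(s):s\in x\}$ of Lemma~\ref{lemm-3-app} is no longer met. Your monotonicity/constant-shift argument still correctly shows that the \emph{composite} operator's own iterates diverge once a verdict is issued; what is missing is a one-sided comparison between that composite and the single-step $\Pre$-operator, so that divergence to $+\infty$ of the $h{=}\min$ composite forces divergence of the $\Pre$-sequence and hence $c<v^*$ via Lemma~\ref{lemm-rel-val} directly (bypassing Lemma~\ref{lemm-mag-rel-val}). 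At the discrete partition this comparison is essentially an equality, which is all the termination argument needs.

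Your last paragraph is the real finding, and it is correct. The midpoint $c=v^*$ defeats both verdicts; once $R$ is discrete all $\Delta(x)$ vanish, \emph{SplitRegions} has nothing to split, and the outer loop stalls. This already falsifies Part~1 under its literal quantifier (termination for \emph{every} $k$): with $k$ too small, or at $c=v^*$, the algorithm need not terminate. The only coherent reading is the one you propose---place both parts under a single ``there exists $k$'' and assume (or arrange by perturbing $\veabs$) that the finitely many bisection midpoints avoid $v^*$. The paper is silent on all of this.
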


\noindent{\em Ensuring uniform value.} 
The following theorem presents a sufficient condition to ensure 
uniform value in a turn-based stochastic game (i.e., the same 
value everywhere). The condition can be checked 
in polynomial time using algorithms for solving turn-based stochastic 
reachability games with qualitative winning criteria~\cite{Condon92}.
For a state $t$ we denote by $\Diamond t$ the set of paths that 
reaches $t$.

\begin{theorem}
Consider a turn-based stochastic game graph $G$ with a reward function $r : S \to \rgz$.
Suppose there exists a state $t$ such that the following conditions hold: 
\begin{compactenum}
\item for all $s \in S$ there exists a player~1 strategy $\straa$ such 
that against all player~2 strategies $\strab$ we have 
$\Prb_s^{\straa,\strab}(\Diamond t)>0$;
and 
\item for all $s \in S$ there exists a player~2 strategy $\strab$ such 
that against all player~1 strategies $\straa$ we have 
$\Prb_s^{\straa,\strab}(\Diamond t)>0$.
\end{compactenum}
Then there exists a real value $v^*$ such that for all $s \in S$ we have 
$\va^G(\LimAvg(r)) =v^*$.
\end{theorem}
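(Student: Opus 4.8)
\smallskip\noindent{\bf Proof plan.}
The plan is, first, to upgrade the two ``positive probability'' hypotheses to ``probability one'', and then to exploit the fact that the long-run average payoff ignores finite prefixes in order to show that every state has the same value as $t$.

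\emph{Step 1: from positive to almost-sure reachability.} I claim that condition~1 already implies the existence of a pure memoryless player-1 strategy $\straa^*$ with $\Prb_s^{\straa^*,\strab}(\Diamond t)=1$ for every $s\in S$ and every player-2 strategy $\strab$. Turn-based stochastic reachability games admit a pure memoryless player-1 strategy that is optimal simultaneously from all states~\cite{Condon92}; fix such an $\straa^*$ for the objective $\Diamond t$. By hypothesis the reachability value $\sup_{\straa}\inf_{\strab}\Prb_s^{\straa,\strab}(\Diamond t)$ is positive for every $s$, hence at least $p:=\min_{s\in S}\sup_{\straa}\inf_{\strab}\Prb_s^{\straa,\strab}(\Diamond t)>0$. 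Against the fixed strategy $\straa^*$ the game becomes an MDP in which player~2 tries to avoid $t$; since pure memoryless strategies are optimal for player~2 in this MDP, it suffices to show $\Prb_s^{\straa^*,\strab}(\Diamond t)=1$ for every pure memoryless $\strab$. For such a $\strab$, the pair $(\straa^*,\strab)$ induces a finite Markov chain on $S$ in which, by optimality of $\straa^*$, $\Prb_s(\Diamond t)\ge p>0$ for every $s$. Hence every recurrent class of this chain must contain $t$ (otherwise the states inside it would reach $t$ with probability~$0$), and since a finite Markov chain almost surely enters a recurrent class and then visits every state of it infinitely often, $\Prb_s(\Diamond t)=1$ for all $s$. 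Taking the infimum over $\strab$ gives the claim, with $\straa^*$ the required almost-sure reaching strategy. By the symmetric argument, condition~2 yields a pure memoryless player-2 strategy $\strab^*$ reaching $t$ almost surely from every state against every player-1 strategy.

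\emph{Step 2: transporting the value of $t$.} Put $v^*:=\va^G(\LimAvg(r))(t)$ and fix $s\in S$. For $\va^G(\LimAvg(r))(s)\ge v^*$: let $\straa_t$ be a pure memoryless player-1 strategy optimal for $\LimAvg(r)$ from $t$ (Theorem~\ref{thrm:quan-det}), and from $s$ let player~1 play $\straa^*$ until the first visit to $t$ and switch to $\straa_t$ thereafter. Against any player-2 strategy, the first-visit time $\tau$ to $t$ is almost surely finite by Step~1; since $\LimAvg(r)$ depends only on the tail of a play, conditioning on the history up to $\tau$ (which ends at $t$) and using the strong Markov property shows that the conditional expected payoff of the suffix equals the $\LimAvg(r)$-payoff obtained from $t$ under $\straa_t$ against the residual player-2 strategy, which is at least $v^*$ by optimality of $\straa_t$; hence the expected payoff from $s$ is at least $v^*$ against every player-2 strategy, so $\va^G(\LimAvg(r))(s)\ge v^*$. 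The reverse inequality is symmetric: by Theorem~\ref{thrm:quan-det} and determinacy, player~2 has a pure memoryless strategy from $t$ that keeps player~1's expected $\LimAvg(r)$-payoff at most $v^*$ (using that $\LimAvg(r)\le -\LimAvg(-r)$ pathwise); combining it with $\strab^*$ in the same ``navigate then play optimally'' fashion gives a player-2 strategy that, from $s$, keeps player~1's expected payoff at most $v^*$, so $\va^G(\LimAvg(r))(s)\le v^*$. Combining the two inequalities yields $\va^G(\LimAvg(r))(s)=v^*$ for all $s$.

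\emph{Main obstacle.} The only non-routine ingredient is Step~1, and its heart is the recurrent-class argument; the place where it uses the full strength of the hypothesis is that reachability of $t$ must have \emph{positive value from every state} (not merely from one distinguished state), which is exactly what forces every recurrent class of every induced chain to contain $t$. Once Step~1 is in place, Step~2 is a routine prefix-independence and strong-Markov argument, prefix-independence of $\LimAvg$ being precisely what makes the finite ``navigation to $t$'' phase free of charge.
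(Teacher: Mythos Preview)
Your proof is correct and follows essentially the same approach as the paper: both first upgrade the positive-probability reachability hypotheses to almost-sure reachability via a pure-memoryless optimal strategy together with the recurrent-class argument, and then transport the value of $t$ to every state. Your write-up is considerably more detailed than the paper's (which simply asserts $\va^G(\LimAvg(r))(s)\ge \va^G(\LimAvg(r))(t)$ once almost-sure reachability is established), in particular spelling out the prefix-independence/strong-Markov reasoning and the $\LimAvg(r)\le -\LimAvg(-r)$ bridge needed to use determinacy for the reverse inequality.
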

\begin{proof}
Suppose condition~1 holds, and then by existence of pure memoryless optimal 
strategies in turn-based stochastic reachability games~\cite{Condon92}, there
is a witness pure memoryless strategy $\straa^*$ to witness that for all 
states $s$ the state $t$ is reached with positive probability against all 
player~2 strategies. 
Hence if we fix any pure memoryless counter strategy $\strab$ for player~2 
the closed recurrent set must contain $t$. 
From the existence of pure memoryless optimal strategies for 
turn-based stochastic games with reachability and safety objectives, 
it follows that player~1 can ensure that from all states $s$ the state $t$ is 
reached with probability~1.
Hence for all states $s$ we have 
$\va^G(\LimAvg(r))(s) \geq \va^G(\LimAvg(r))(t)$.
Similarly, if condition~2 holds, then player~2 can ensure that $t$ can be 
reached with probability~1 from all states $s$ and hence for all states $s$ 
we have  $\va^G(\LimAvg(r))(s) \leq \va^G(\LimAvg(r))(t)$.
Hence $v^*= \va^G(\LimAvg(r))(t)$ is the witness real value to show that the
claim holds.
\qed
\end{proof}

%%\subsection{MLA for MDPs}
\smallskip\noindent{\bf MLA for MDPs.}
Now we present the magnifying lens abstraction 
solution for all MDPs with long-run average 
objectives (i.e., the solution works for MDPs such that values 
at different states may be different).
The main idea relies on the \emph{end component} decomposition
of an MDP. 
%%We define the notion of an end component in an MDP.

\begin{definition}[End component] 
Given an MDP $G =((S, E), (\SA,\SR),\trans)$ a set $C$ of states
is an \emph{end component} if the following conditions hold:
(a)~the set $C$ is strongly connected component in the graph induced
by $(S,E)$; and 
(b)~for all probabilistic states $s \in C \cap \SR$, all out-going edges
of $s$ is contained in $C$, i.e., $E(s) \subseteq C$. 
An end component $C$ is \emph{maximal} if for any end component $C'$ we 
have  either 
(a)~$C' \subseteq C$ or (b)~$C' \cap C =\emptyset$. 
\end{definition}

The following theorem states that given an MDP with a long-run 
average objective, if we consider the sub-game 
graph induced by an end component $C$, then all states in $C$ would
have the same value.

\newcommand{\subgraph}{\upharpoonright}
\begin{theorem}[Existence of uniform value for MDPs]
\label{theo-end-comp-uniform} 
Let $G =((S, E), (\SA,\SR),\trans)$ be an MDP with a reward function 
$r$. 
Consider an end component $C$ in $G$ and the sub-game graph 
$G \subgraph C$ induced by $C$.
Then there exists a real value $v^*$ such that for all $s \in C$ 
we have $\va^{G\subgraph C}(\LimAvg(r))(s) =v^*$.
\end{theorem}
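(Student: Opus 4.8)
\noindent\emph{Proof plan.}
The plan is to exploit that $G \subgraph C$ is itself an MDP whose underlying graph is strongly connected, so that from any state the controller (player~$\PA$) can force a visit to any other state with probability~$1$. Since a finite prefix has no effect on a long-run average, the value can then be ``transported'' between any two states of $C$, forcing it to be constant on $C$.

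\textbf{Step 1: a reachability fact.}
First I would show: in $G \subgraph C$, for all $s,t \in C$ there is a pure memoryless strategy $\rho$ with $\Prb_s^{\rho}(\Diamond t) = 1$. Fix $t$ and let $\delta(\cdot)$ be graph distance to $t$ within $C$, which is finite everywhere by strong connectivity. Let $\rho$ move, at each player-$\PA$ state, to a successor minimizing $\delta$; because $C$ is an end component, all successors of a probabilistic state already lie in $C$, so no play ever leaves $C$. From any state at distance $d \le |C|-1$, following $\rho$ there is a path to $t$ of length $d$ along which $\delta$ strictly decreases at every step, and each such step is taken with probability at least some fixed $\eta>0$ (a positive constant since $C$ is finite; $\eta=1$ when $\SR \cap C = \emptyset$). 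Hence from every state $t$ is hit within $|C|-1$ steps with probability at least $\eta^{|C|-1}>0$, and iterating over successive blocks of $|C|-1$ steps gives $\Prb_s^{\rho}(\Diamond t) \ge 1 - \prod_{k\ge1}(1-\eta^{|C|-1}) = 1$. (If $|C|=1$ the end component carries a self-loop and $\LimAvg(r)$ is trivially constant on $C$, so assume $|C|\ge 2$.)

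\textbf{Step 2: transporting the value.}
Fix $s,t \in C$ with $s\neq t$. By Theorem~\ref{thrm:quan-det} choose a pure memoryless optimal strategy $\sigma_t$ from $t$ in $G\subgraph C$, and let $\sigma$ be the strategy from $s$ that plays the transfer strategy $\rho$ of Step~1 (with target $t$) until the first visit to $t$, occurring at a random time $\tau<\infty$ almost surely, and plays $\sigma_t$ from that point on. For any play $\pat=\seq{x_0,x_1,x_2,\ldots}$ with $\tau(\pat)=n<\infty$, writing $\pat'=\seq{x_n,x_{n+1},\ldots}$ for the suffix, boundedness of $r$ gives $\frac1T\sum_{i=0}^{n-1}r(x_i)\to0$ while $\frac{T-n}{T}\to1$, so $\LimAvg(r)(\pat)=\LimAvg(r)(\pat')$: a finite prefix is irrelevant. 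By the Markov property and memorylessness of $\sigma_t$, conditioned on $\{\tau=n\}$ the suffix $\pat'$ is distributed exactly as a play from $t$ under $\sigma_t$; integrating over $n$ yields $\Exp_s^{\sigma}[\LimAvg(r)]=\Exp_t^{\sigma_t}[\LimAvg(r)]=\va^{G\subgraph C}(\LimAvg(r))(t)$. Hence $\va^{G\subgraph C}(\LimAvg(r))(s)\ge\va^{G\subgraph C}(\LimAvg(r))(t)$, and by symmetry equality holds. Since $s,t\in C$ were arbitrary, there is a single real value $v^*$ with $\va^{G\subgraph C}(\LimAvg(r))(s)=v^*$ for all $s\in C$.

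\textbf{Main obstacle.}
I expect the only genuinely delicate step to be the transport argument of Step~2 --- specifically, making precise that the almost-surely-finite random stopping time $\tau$ can be discarded, i.e. combining a careful instance of the strong Markov property for the spliced strategy with the prefix-invariance of the $\liminf$-average. The reachability fact of Step~1 is routine given finiteness of $C$ and the end-component closure condition, and the memoryless optimal strategy $\sigma_t$ is supplied directly by Theorem~\ref{thrm:quan-det}.
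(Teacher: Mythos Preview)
Your argument is correct. The paper states this theorem without proof (neither in the main text nor in the appendix), so there is nothing to compare against directly. The natural reference point is the paper's proof of the preceding theorem giving sufficient conditions for a uniform value in $2\half$-player games: there the two inequalities $\va(s)\ge\va(t)$ and $\va(s)\le\va(t)$ come from player~$\PA$ and player~$\PB$, respectively, each being able to force a visit to a distinguished state $t$ almost surely. That result does not literally specialize to an end component of a player-$\PA$ MDP, since player~$\PA$ can typically avoid any fixed $t$ and the second hypothesis fails; you correctly sidestep this by exchanging the roles of $s$ and $t$ and invoking strong connectivity a second time to obtain the reverse inequality. Your Step~2 also spells out the prefix-invariance and strong-Markov reasoning that the paper's earlier proof leaves implicit in the phrase ``hence for all states $s$ we have $\va(s)\ge\va(t)$.''

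One cosmetic point in Step~1: at probabilistic states the realized trajectory under $\rho$ need not strictly decrease $\delta$; only \emph{one} successor with smaller $\delta$ is guaranteed, taken with probability at least $\eta$. What your block-iteration bound actually uses is merely that from every state there is some $\rho$-consistent path to $t$ of probability at least $\eta^{|C|-1}$, and that suffices for $\Prb_s^{\rho}(\Diamond t)=1$. This does not affect correctness.
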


It follows from Theorem~\ref{theo-end-comp-uniform} that if we consider the 
sub-game graph induced by an end component of an MDP, then the condition 
of uniform value (all states having the same value) is satisfied. 
It follows from the results of~\cite{luca-thesis,CY95} that in an MDP 
for all strategies with probability~1 the set of states visited infinitely
often is an end component. 
Hence a pure memoryless optimal strategy consists in reaching the correct 
end component, and then play optimally in the end component.
Thus we obtain the following magnifying lens abstraction algorithm for MDPs; 
the algorithm consists of the following steps:
\begin{compactenum}
\item the MDP is decomposed into its maximal end components (this can be 
achieved in quadratic time);
\item in the sub-game graph induced by an maximal end component we 
approximate the values by the general algorithm 
%%of Subsection~\ref{subsec-longrun-stochgames} 
(since the uniform value condition satisfied we can apply the general algorithm 
for stochastic games);

\item once the values in every maximal end component are approximated 
we can collapse every maximal end component as a single state and obtain 
an MDP with no non-trivial end component (every end component is a single state end component), 
and then compute the values by an algorithm that computes the maximal value that 
can be reached in an MDP with no non-trivial end components. 
 
\end{compactenum}

\vspace{-1em}
\section{Examples and Experimental Results}\label{sec-ex}
\vspace{-1em}
In this section, we provide examples and case studies on MDP models with large state-spaces and the locality property. 
In our implementation we only consider MDPs because common probabilistic model-checkers (like PRISM) only support MDPs. 
In future work we will consider the stochastic games implementations. Our examples show that the value-based abstraction 
methods perform better than the transition-based abstraction methods for MDPs with locality property, although for 
some other case studies the algorithm may provide worse performance. We first present the examples, 
then our symbolic implementation of MLA algorithms for MDPs with discounted objectives and finally, our experimental results.

\begin{figure*}[htb]
\centering
{\footnotesize
%%\small{
\begin{tabular}{| c | l |r| r | r r  | r r r |}\hline
Example & \multicolumn{1}{|c|}{Parameters} & \multicolumn{1}{|c|}{States} &  \multicolumn{1}{|c|}{Transitions} &\multicolumn{2}{|c|}{non-MLA} & \multicolumn{3}{|c|}{MLA}  \\ \cline{5-9}
& & &  & \multicolumn{1}{c}{Nodes}  &\multicolumn{1}{c}{Time} & \multicolumn{1}{|c}{Nodes} & \multicolumn{1}{|c}{Time} & \multicolumn{1}{c|}{Regions}  \\ \hline
Planning  & n=256,m=40 & 65,537 & 265,419  & 3,981 &  28 &  1,658  &   37  & 330  \\ 
           	& n=512,m=40  & 262,145 & 1,063,603 & 12,420 &  106   &  3,324 &  191  & 1,121  \\
          	& n=1024,m=50 & 1,048,577& 4,211,564   &  15,365  & 616 & 4,596 & 883 & 1,670 \\ 
          \hline
Auto & $n_{\max}$=2,047 $t_{\max}$=2,047 & 4,194,304 & 20,965,376 &  12,719 & 65 &  1,171 & 209  & 99 \\
 Inventory               & $n_{\max}$=2,047 $t_{\max}$=4,095 & 8,388,608 & 41,930,752 &  12,676 & 114 &  1,095 & 259  & 99 \\
 	           	   & $n_{\max}$=4,095,$t_{\max}$=4,095 & 16,777,216 & 83,873,792 & 25,434 & 287  & 6,293 &  606 &  99\\   
	                         \hline
	                         Machine  & n =1023, tm=1023 &  1,047,552 &  3,141,632 & 6,051  & 17 & 419 & 64  & 63\\
Replacement    	  & n =2047, tm=2047 & 4,192,256 &  12,574,720  &  12,185 & 47 & 960 & 152  & 64\\
	                             & n =4095, tm=4095 &  16,773,120  &  50,315,264  &  24,461 & 141 & 960 & 364  & 65\\
               \hline

Network & M=7,tm =2,047  &  1,781,760 & 7,157,760  &  328 & 5  & 227 & 11  & 245 \\
Protocol     & M = 15, tm=2,047 &  7,745,536 & 31,045,632 & 369 & 8  & 267 & 24 & 647\\
	         & M = 15, tm=4,095 & 15,491,072 &  62,091,264 & 369 & 17  & 267 & 33 & 647\\
                \hline                
                
\end{tabular}
%%}
\caption{Experimental results: Symbolic discounted MLA, compared to  discounted value iteration}
\label{results}
}
\end{figure*}

{\bf Planning:} We consider an MDP that models the movement of a robot in a two-dimensional ($n \times n$) grid. The grid contains $m$ mines. The robot at position $(x,y)$ can choose to move in any of forward, backward,  left or right direction to reach the positions ($x+1,y$),  ($x-1,y$), ($x,y-1$) or ($x,y+1$) respectively. However, there is a chance $p$ that the robot may not reach the desired positions and {\em dies} due to the explosion of a mine, and in that case the robot reaches a special state called {\em sink state}. The probability distribution (i.e., $p$) is a function over distances from the $m$ mines. The robot spends power for its movement and collects rewards (i.e. recharges) associated with the chargeable states when it visits them. In this example, the state-space is two-dimensional, and every state has transitions to the next states in the state-space (i.e., has the locality property). The robot needs to explore the grid points in an intelligent manner such that the robot spends minimum energy. The property of interest is either to maximize the {\em discounted reward} or the {\em long run-average} reward of the robot.

{\bf Automobile Inventory:} In this example we model an inventory of an automobile company. Let $n$ denote the current number of items in the inventory and $0 \le n \le n_{\max}$ holds where $n_{\max}$ denotes the maximum capacity of the inventory. Let $t$ denote the age of the inventory in months and $0 \le t \le t_{\max}$ holds where $t_{\max}$ is the total life time of the inventory. Let us assume that an unsold car in the inventory becomes cheaper every year by a discount factor as $\beta$. Every year company decides whether to manufacture a predefined number $nc$ of new cars. We assume that the number of cars sold per month, denoted by $sold$, follows a uniform probability distribution. We also assume that the value of $sold$ can vary within a small range [$sold_{\min}, sold_{\max}$] and these two constants can be obtained from the car-sale statistics. The reward function is obtained from the cost (negative reward) of manufacturing, and the price (positive reward) of selling a car. The state-space of the model is defined as $S= \langle n, t \rangle$ where $n$ denotes the number of the cars and $t$ denotes the current month. The inventory example contains the locality property; since each state ($n,t$) has transitions to the nearby states ($\langle n-sold,t+1 \rangle$,   $\langle n,t+1\rangle$ or $\langle n+nc,t+1\rangle$).  After each fiscal year, the company computes the optimal value of the inventory keeping the discount factor in mind. The property of interest is the ``optimal discounted sum" of the car inventory.

{\bf Machine Replacement:} In this example we model the machine replacement problem. The state of machine can be in $n$ different working states from $0$ to $n-1$. The state value $0$ denotes that the machine is not working and the state value $n-1$ denotes the machine is new. The time is denoted by the variable $t$ and ranges between $0$ to a predefined maximum value $tm$. The machine can be replaced at any time and the new machine costs money (assume that the machine replacement does not add any reward). The machine can get more work done when it is in a better (state value higher) state, hence earns more money. The property of interest is the "optimal discounted value" of the machine.

{\bf Network Protocol:} Let us assume that $n$ computers follow a simpler version of Ethernet protocol to send a pre-defined number $(M-1)$ of packets to a shared channel. Let $t$ denotes the time elapsed since the start of the protocol and the condition  $0 \le t \le t_{\max}$ holds, where $t_{\max}$ is the time-out limit of the protocol. The state-space of the model can be given as a tuple $S = \langle pk_1, pk_2, \ldots, pk_n \rangle$ where $0 \le pk_i \le M$ denotes the number of the packets sent from the computer $i$. If the computer $i$ sends one more packet to the channel at time $t$, then the $i$-th component of the state changes to $\min \{ (pk_i+1), M \}$. However, two or more computers can send packets to the channel at the same time frame ({\em collision}) and both packets are lost (the state of MDP does not change). After the collision, each computer waits for a random amount of time before sending it again.
Each computer will check whether the channel is busy in time frame $t$. If the channel is busy at frame $t$, the computer does not send packets at frame $(t+1)$. Otherwise, the computer has two actions - either (1) send at frame $(t+1)$, or (2) does not send. When two computers send packets to the channel at the same time frame, there is a collision and both packets are lost. After the collision, each computer waits for a random amount of time before sending it again. The waiting time for the next packet are decided by the stations following a probability distribution. Since the packets numbers are serial in numbers, the MDP model contains the locality property. The average throughput of the shared channel is measured by the percentage of the packets sent without a collision. We are interested to compute the efficiency of the protocol by computing the average or discounted throughput property.

{\bf MTBDD-based Symbolic Implementation in PRISM :} 
We have implemented both versions (with and without MLA) of symbolic discounted algorithms within the probabilistic model checker PRISM~\cite{KNP02a}. We used the MTBDD engine of PRISM, since (a) it is generally the best performing engine for MDPs; and (b) it is the only one that can scale to the size of models we are aiming towards. The current examples with quantitative objectives cannot be handled directly by PRISM, and hence we have added a new functionality of discounted reward computation in the tool PRISM. The initial partitions are picked based on the following choice. Internally, every integer variables with range $l$ are converted into $log_2(l)$ binary variables.
 If the program have $k$ binary variables, then we pick $\frac{k}{2}$ as the initial level of abstraction. We have tried two types of partitioning procedure as proposed in ~\cite{RoyPNdA08}. 

{\bf Results :} The table above summarizes the results for all case studies (with discount factor $0.9$, $\epsilon_{abs} = 0.01$ and $\epsilon_{float} = 0.0001$). The first two columns show the name and parameters of the MDP model. The third and fourth columns give the number of states and transitions for each model respectively. The remaining columns show the performance of analyzing the MDPs, using both versions of the discounted algorithms. In both cases, we give the total time required in seconds and the peak MTBDD node count. For MLA, we also show the final number of generated regions. Our results show that MLA algorithm leads to significant space savings which is the real bottleneck in analysis of large MDPs. The number of regions increases with respect to the state-space; however the increase is linear or constant in these examples. The MTBDD node count columns provide a clear view that the symbolic iterations in the value-iteration involve the whole state space and the peak node-count is higher. MLA algorithms computes the value iteration in each region in a sequential manner, hence the size of the MTBDD graph is also smaller. There is a slowdown when MLA is applied; however time is not a bottleneck in the symbolic model-checking tools like PRISM. Most case-studies that PRISM cannot handle often fail due to excessive memory requirements, not due to time. It is also clear, from the sizes of the MDPs in the table, that the symbolic version of MLA is able to handle MDPs considerably larger than were previously feasible for the explicit implementation of \cite{deAlR07}.

\vspace{-1em}
\section{Conclusion}
\vspace{-1em}
In this paper we extend the MLA technique to solve MDPs and stochastic games with quantitative objectives. MLA is particularly well-suited to problems where there is a notion of {\em locality\/} in the state space, so that it is useful to cluster states based on values, even though their transition relations may not be similar. Many inventory, planning and control problems satisfy the locality property and would benefit from the MLA technique. In the setting of inventory, planning and control problems quantitative objectives are more appropriate than qualitative objectives. We present the MLA technique based abstraction-refinement algorithm for both stochastic games and MDPs with discounted and long run objectives. To demonstrate the applicability of our algorithms we present a symbolic implementation of our algorithms in PRISM for MDPs with discounted objectives. Our experimental results show that the MLA based technique gives significant space saving over value-iteration methods.

\clearpage
\section*{Appendix}

\begin{proof} {\bf (of Theorem~\ref{thrm-basic-property}).}
Since $f_{-}$ and $f_{+}$ are bounded and monotonic, it follows that 
there exists fixpoints $v_{-}^*$ and $v_{+}^*$ of  $f_{-}$ and $f_{+}$, 
respectively. 
Since $f$ is bounded by $f_{-}$ and $f_{+}$ it follows that the fixpoint 
$v^*$ of $f$ is bounded by $v_{-}^*$ and $v_{+}^*$, respectively, i.e.,
$v_{-}^* \leq v^* \leq v_{+}^*$.  
The desired result follows.
\hfill\qed
\end{proof}

\begin{proof} {\bf (of Lemma~\ref{lemm-1-app}).}
The properties are straightforward to verify using Definition~\ref{defi-mpre}.
\hfill\qed
\end{proof}

\begin{proof} {\bf (of Lemma~\ref{lemm-2-app}).}
It follows from Lemma~\ref{lemm-1-app} that $l_{-}$ and $l_{+}$ 
satisfies the properties of functions $f_{-}$ and $f_{+}$ of 
Theorem~\ref{thrm-basic-property}, respectively.
The results then follows from Theorem~\ref{thrm-basic-property}.
\hfill\qed
\end{proof}

\begin{proof} {\bf (of Lemma~\ref{lemm-3-app}).}
The result is easy to using Definition~\ref{defi-mpre}
and Definition~\ref{defi-mprex}.
\hfill\qed
\end{proof}

\begin{proof} {\bf (of Theorem~4 and Theorem~5).}
Both the proofs are based on the fact that as $\veabs$ and $\vefloat$ 
converges to $0$, the output of the MLA algorithm converges to the 
value of the game.
\hfill\qed
\end{proof}

\begin{proof} {\bf (Lemma~\ref{lemm-mag-rel-val}).}
It follows from definition that for all valuations 
$v$ we have 
\[
\MPre(\min,v,R) \leq \Pre(v) \leq \MPre(\max,v,R).
\]
The result follows from the above inequalities and the results of 
Lemma~\ref{lemm-rel-val}.
\hfill\qed
\end{proof}

%\bibliographystyle{latex8}
%\bibliography{latex8}

\end{document}